\def\ut{{\tilde{u}}}
\def\Au{{\mathcal{A}_{u}}}
\def\Auk{{\mathcal{A}_{u_k}}}
\def\Aut{{\mathcal{A}_{\tilde{u}}}}
\def\C{{\mathcal{C}}}
\def\Q{{\mathcal{Q}}}
\def\Nalpha{{\mathcal{N}_{\alpha}}}
\def\Nalphat{{\mathcal{N}_{\tilde{\alpha}}}}
\def\Idelta{{\mathcal{I}_{\delta}}}
\def\Hdelta{{H_{\delta}}}
\title{Error bounds for gradient density estimation computed from a finite sample set using the method of stationary phase}
\author{
Karthik S. Gurumoorthy\footnotemark[3]\ \footnotemark[2]
\and Anand Rangarajan\footnotemark[4]\ \footnotemark[5]
}
\begin{document}
\maketitle 
\slugger{sinum}{xxxx}{xx}{x}{x--x}

\renewcommand{\thefootnote}{\fnsymbol{footnote}}
\footnotetext[3]{Corresponding author. International Center for Theoretical Sciences, Tata
 Institute of Fundamental Research, Bangalore, Karnataka, India (\email{karthik.gurumoorthy@icts.res.in}).}
 
 \footnotetext[4]{Department of Computer and Information Science and Engineering, 
  University of Florida, Gainesville, Florida, USA (\email{anand@cise.ufl.edu}).}
  
 \footnotetext[2]{This work benefited from the support of the
  AIRBUS Group Corporate Foundation Chair in Mathematics of Complex
  Systems established in ICTS-TIFR.}
   
\footnotetext[5]{This work was partially funded by  NSF IIS 1065081. }

\begin{abstract}
For a twice continuously differentiable function $S$, we define the density function of its gradient (derivative in one dimension) $s = S^{\prime}$ as a random variable transformation of a uniformly distributed random variable using $s$ as the transformation function. Given $N$ values of $S$ sampled at equally spaced locations, we demonstrate using the method of stationary phase that the approximation error between the integral of the scaled, discrete power spectrum of the wave function $\phi^{D}_{\tau}=\frac{1}{\sqrt{L}}\exp\left(\frac{iS}{\tau}\right)$ and the integral of the true density function of $s$ over an arbitrarily small interval is bounded above by $O(1/N)$ as $N \rightarrow \infty$ ($\tau \rightarrow 0$). In addition to its easy implementation and fast computability in $O(N \log N)$ that only requires computing the discrete Fourier transform, our framework for obtaining the derivative density does not involve any parameter selection like the number of histogram bins, width of the histogram bins, width of the kernel parameter, number of mixture components etc. as required by other widely applied methods like histograms and Parzen windows. 
\end{abstract}

\begin{keywords} error bounds; density estimation; stationary phase approximation;
  Fourier transform; convergence rate
\end{keywords}
\begin{AMS} 62G07, 42A38, 41A60, 62G20 
\end{AMS}

\pagestyle{myheadings} \thispagestyle{plain}
\markboth{KARTHIK GURUMOORTHY et. al.}{ERROR BOUNDS FOR GRADIENT DENSITY ESTIMATION}

\section{Introduction}
Density estimation methods attempt to estimate an unobservable probability density function using observed data \cite{Parzen62,Rosenblatt56,Silverman86,Bishop06}. The observed data are treated as random samples from a large population which is assumed to be distributed according to the underlying density function.  The aim of our current work is to compute the density function of the gradient---corresponding to derivative in one dimension---of a function $S$ (density function of $S^{\prime}$) from finite set of $N$ samples of $S$ using the method of stationary phase \cite{Cooke82,Jones58,McClure91,McClure97,OlverBook74,Wong81} and bound the error between the estimated and the unknown true density as a function of $N$.  If $s = S^{\prime}$ represent the derivative of the function $S$, the density function of $s$ is defined via a random variable transformation of the uniformly distributed random variable $X$ using $s$ as the transformation function. In other words, if we define a random variable $Y=s(X)$ where the random variable $X$ has a uniform distribution on the interval $\Omega = [0,L]$, the density function of $Y$ represents the density function of $s$. 

In the field of computer vision many applications arise where the density of the gradient of the image, also popularly known as the histogram of oriented gradients (HOG), directly estimated from samples of the image are employed for human and object detection \cite{Dalal05,Zhu06}. Here the image intensity plays the role of the function $S$ and the distribution of intensity gradients or edge directions are used as the feature descriptors to characterize the object appearance and shape within an image. In the recent article \cite{Hu13}, an adaption of the HOG descriptor called the Gradient Field HOF (GF-HOG) is used for sketch based image retrieval. 

Our current work is along the lines of our earlier efforts \cite{Gurumoorthy12,Gurumoorthy13}. In \cite{Gurumoorthy12} we focused on exploiting the stationary phase tool to obtain gradient densities of Euclidean distance functions in two dimensions. As the gradient norm of Euclidean distance functions is identically equal to $1$ everywhere, the density of the gradients is one-dimensional and defined over the space of orientations. In \cite{Gurumoorthy13} we generalized and established this equivalence between the power spectrum and the gradient density to arbitrary smooth functions in arbitrary finite dimensions. The fundamental point of departure between our current work and the results proved in \cite{Gurumoorthy12,Gurumoorthy13} is that here we compute the derivative density from a \emph{finite, discrete} samples of $S$, rather than requesting the availability of complete description of $S$ on $\Omega$ as sought in \cite{Gurumoorthy12,Gurumoorthy13}. Given only $N$ samples of $S$ the convergence proof involving continuous Fourier transform in \cite{Gurumoorthy12,Gurumoorthy13} has to be substituted with its discrete counterpart. Aliasing errors \cite{Bracewell99} which are non-existent in the continuous case have to be explicitly addressed in the present discrete setting. Curious enough we find that the free parameter $\tau$, which could be set arbitrarily close to zero in the continuous case, has to respect a lower bound proportional to $1/N$ in the discrete scenario and an apposite value of $\tau$ as a function of $N$ can be explicitly determined. Apart from establishing the equivalence between the power spectrum and the gradient density, we also quantify the approximation error as a function of $N$, a result \emph{not} discussed in \cite{Gurumoorthy12,Gurumoorthy13}. Even in one dimension we find the discrete setting to be challenging and worthy of a separate examination. The discrete, one dimensional case seems to posses most of the mathematical complexities of its higher dimensional counterpart (thought at this point we are not entirely sure) and lays the foundation for extending our bounds on approximation error to arbitrary finite dimensions, a task we plan to take up in the future. 

\subsection{Main Contribution}
Say we have $N$ samples of a function $S$ obtained at uniform intervals of $\delta = \frac{L}{N}$ between $[0,L]$ at locations $y_n = (n+\frac{1}{2}) \delta, 0 \leq n \leq N-1$ denoted by the  set $\{S(y_n)\}_{n=0}^{N-1}$. As before, let $s = S^{\prime}$ denote the derivative of $s$. For all positive parameter $\tau >0$, we define a function
\begin{equation}
\label{def:phidiscrete}
\phi_{\tau}^{D}(y_n) \equiv \frac{1}{\sqrt{L}} \exp\left(\frac{i S(y_n)}{\tau}\right) 
\end{equation}
at these $N$ discrete locations $\{y_n\}_{n=0}^{N-1}$ by expressing $S$ as the phase of the wave function $\phi^D$ and consider its discrete power spectrum at the suitable choice of $\tau \propto \frac{1}{N}$. We show that the approximation error between the integral of this discrete power spectrum over an arbitrary small interval $\Nalpha$ with the interval length chosen \emph{independently} of $N$ and the cumulative measure of the true density of $s$ over $\Nalpha$ is bounded above by $O(1/N)$. The formal mathematical statement of our result is stated in Theorem~\ref{thm:derivativedensity}. In our current effort we affirmatively answer the following questions:
\begin{enumerate}
\item As the number of samples $N \rightarrow \infty$, does the discrete power spectrum (its interval measure to be precise) increasingly closely approximates the true density of the derivatives? 
\item If yes, can we estimate the approximate error as a function of $N$?
\item Is there a lower bound on $\tau$ as a function of $N$ that precludes it from being set arbitrarily close to zero?
\item Is there an optimum value for $\tau$ as a function of $N$?
\end{enumerate}
 We call our approach the \emph{wave function method} for computing the probability density function and henceforth will refer to it as such.
 
 \subsection{Brief exposition of our previous continuous case result}
\noindent The crux of our continuous case results in \cite{Gurumoorthy12,Gurumoorthy13}---when restricted to one dimension---is the fact the frequency values $u$ are the \emph{gradient histogram bins} for the stationary points of the function
\begin{equation}
\label{def:Txu}
T(x;u) = S(x) - ux.
\end{equation}
To elaborate, consider the definition of the continuous case scaled Fourier transform in~(\ref{def:scaledFT}).
The first exponential $\exp\left(\frac{i S(x)}{\tau}\right)$ is a varying
complex {}``sinusoid\textquotedblright{}, whereas the second exponential
$\exp\left(-\frac{iu x}{\tau}\right)$ is a fixed complex sinusoid
at frequency $\frac{u}{\tau}$. When we multiply these two complex
exponentials, at low values of $\tau$, the two sinusoids are usually
not {}``in sync\textquotedblright{} and tend to cancel each other
out. However, around the locations where $s(x)=u$, the two
sinusoids are in perfect sync (as the combined exponent is \emph{stationary})
with the approximate duration of this resonance depending on $S^{\prime\prime}(x)$.
The value of the integral in~(\ref{def:scaledFT}) can be approximated
via the stationary phase approximation \cite{OlverBook74}
as 
\begin{equation*}
F_{\tau}(u)\approx\frac{1}{\sqrt{L}}\exp\left(\pm\frac{i\pi}{4}\right)\sum_{m=1}^{M(u)}\exp\left(\frac{i}{\tau}\left[S(x_{m})-u x_{m}\right]\right) \frac{1}{\sqrt{S^{\prime\prime}(x_m)}}
\end{equation*}
where $M(u)=\left|\mathcal{A}_{u}\right|$, the latter defined in~(\ref{eq:setA}) and $\{x_m\}_{m=1}^{M(u)}$ are the stationary points for the given frequency $u$. The approximation is increasingly
tight as $\tau\rightarrow0$. The power spectrum $P_{\tau}(u)$
gives us the required result $\frac{1}{L}\sum_{k=1}^{M(u)}\frac{1}{\left|S^{\prime\prime}(x_m)\right|}$
except for the \emph{cross} phase factors $S(x_m)-S(x_t)-u(x_m-x_t)$
obtained as a byproduct of two or more remote locations $x_m$ and
$x_t$ indexing into the same frequency bin $u$, i.e, $x_m\not=x_t$,
but $s(x_m)=s(x_t)=u$. Integrating the power spectrum over a small frequency range $\Nalpha$ removes
these cross phase factors and we obtain the intended result.
 
\subsection{Significance of our current result}
The benefits of our wave function method for computing the density function of the derivative are multi fold and are stated below:
\begin{enumerate}
\item One of the foremost advantages of our approach is that it recovers the derivative density function of $S$ \emph{without} explicitly determining
its derivative $s$.  Since the stationary points capture derivative information and map them into
the corresponding frequency bins, we can directly work with $S$ circumventing the need to compute its derivative. 
\item Our method is extremely fast in terms of its computational complexity. Given the $N$ sampled values, the discrete Fourier transform of $\exp\left(\frac{iS(x)}{\tau}\right)$ at the apt value of $\tau$ can be computed in $O(N \log N)$ \cite{Cooley65} and the subsequent squaring operation to obtain the power spectrum can be performed in $O(N)$. Hence the overall time complexity to obtain the density is only $O(N \log N)$.
\item As established subsequently, our wave function method approximates as $O(1/N)$ to the true density. For histograms and the kernel density estimators \cite{Parzen62,Rosenblatt56} the approximation errors are established for the integrated mean squared error (IMSE) expressed as:
\begin{align}
\label{eq:IMSE}
IMSE(N) \equiv \mathbb{E}\|P-\tilde{P}_{N}\|_2^2 = \mathbb{E} \int \left(P(u)-\tilde{P}_{N}(u)\right)^2 \, du,
\end{align}
where $P(u)$ is the true density, $\tilde{P}_N(u)$ is the computed density from the given $N$ samples, and the expectation $\mathbb{E}$ is with respect to samples of size $N$. The approximation error of IMSE for histograms and kernel density estimators are proven to be $O\left(N^{-\frac{2}{3}}\right)$ \cite{Scott79,Cencov62} and $O\left(N^{-\frac{4}{5}}\right)$ \cite{Wahba75} respectively. Prior to asserting that the $O(1/N)$ approximation of our wave function method is superior compared to histograms and kernel density estimators, we would like to caution the reader of the following:
\begin{enumerate}
\item As the sample locations $\{y_n\}_{n=1}^N$ are fixed, taking expectations over all possible samples of size $N$ in order to compute the IMSE loses its relevance in our setting. 
\item The \emph{point-wise} convergence result stated in Theorem~\ref{thm:derivativedensity} is not entirely commensurate with the convergence of IMSE which involves computing the $\ell_2$ error---obtained by integrating the \emph{square} of the difference between the true and the computed probability densities over $all$ the locations---as expressed in (\ref{eq:IMSE}).
\end{enumerate}
Bearing in mind the aforesaid key differences we refrain from drawing any affirmative conclusions.
\item Our framework for obtaining the density does not involve any parameter selection like number of histogram bins, width of the histogram bins, width of the kernel parameter, number of mixture components etc. as required by other widely applied methods like the histograms and the kernel density estimators \cite{Parzen62,Rosenblatt56}. It is worth emphasizing that though $\tau$ appears to be a free parameter in our setting, we explicitly provide an optimal value for $\tau$ computed solely based on the samples of the function $S$.
\end{enumerate}

%

Table~\ref{table:importantsymbols} lists the important symbols used in this article and their interpretations. 

\begin{center}
\begin{table}[ht!]
\caption{List of important symbols \label{table:importantsymbols}}
\noindent \centering{}%
\begin{tabular}{|l|l|}
\hline 
Symbols  & Interpretation \tabularnewline
\hline\hline
$i, \tau$ & The imaginary unit satisfying $i^2 = -1$ and a free parameter respectively. \tabularnewline
\hline
$S,s, \phi, B$ & The true function, its derivative, the sinusoidal function containing $S$  \tabularnewline
& in its phase, and the bound on the derivative respectively. \tabularnewline
\hline
$\delta, N, \Omega, L$ & The sampling interval, number of samples, domain of $S$, and the length  \tabularnewline
& of domain respectively. \tabularnewline
\hline
$\{y_n\}_{n=0}^{N-1}, \{u_k\}_{k=0}^{N-1}$  & Sampling and the frequency locations respectively. \tabularnewline
\hline 
$\{x_m\}_{m=1}^{M(u)}, \{x_t\}_{t=1}^{M(u)}$ & Interchangeable notations of the same finite set $\Au$ of cardinality $M(u)$  \tabularnewline
& containing the stationary points for a given frequency value $u$. \tabularnewline
\hline
$F_{\tau}^D(u_k), P_{\tau}^D(u_k)$ & Scaled discrete Fourier transform and its magnitude square respectively.\tabularnewline
\hline
 $P_{\tau}^{DTFT}(u)$ & Magnitude square of the scaled discrete time Fourier transform. \tabularnewline
\hline
$P(u)$ & The true density of $s$ obtained via random variable transformation. \tabularnewline
\hline
$\epsilon_{r,\tau}, r \in \{1,2,3,4,5\}$ & Error terms. \tabularnewline
\hline 
$f_{\tau}(u) = O(\tau)$ & There exist a constant $\lambda >0$ and a bounded continuous function $\gamma(u)$ \tabularnewline
& both independent of $\tau$ such that when $\tau \leq \lambda$, $|f_{\tau}(u)| \leq \tau \gamma(u)$. \tabularnewline
\hline 
\end{tabular}
\end{table}
\par\end{center}

\section{Nature of the true function $S$ and existence of density function}
\label{sec:Assumption}
The function $S$ is assumed to be twice continuously differentiable, defined on the closed interval $\Omega = [0, L]$ with length $L$ and has a non-vanishing second derivative
\emph{almost everywhere} on $\Omega$, i.e. 
\begin{equation}
\mu\left(\{x \in \Omega :S^{\prime\prime}(x)=0\}\right)=0,
\label{eq:assumption1}
\end{equation} 
where $\mu$ denotes the Lebesgue measure. As clarified below, the assumption in (\ref{eq:assumption1}) is made in order to ensure that the density function of $s$ exists almost everywhere. The required smoothness on $S$ will become clear in the proof of the subsequent lemmas.

\noindent Define the following sets: 
\begin{align}
\label{eq:setB}
\mathcal{B} & \equiv  \{x:S^{\prime\prime}(x)=s^{\prime}(x)=0\}\, \\
\label{eq:setC}
\C & \equiv  \{s(x):x\in\mathcal{B}\}\cup\{s(0),s(L)\},\,\mathrm{and}\, \\
\label{eq:setA}
\Au & \equiv  \{x:s(x)=u\}.
\end{align}
Here, $s(0)=\lim_{x\rightarrow 0^{+}}s(x)$ and $s(L)=\lim_{x\rightarrow (L)^{-}}s(x)$. The higher derivatives of $S$ at the end points $0$ and $L$ are
also defined along similar lines using one-sided limits. The main purpose of
defining these one-sided limits is to exactly determine the set $\mathcal{C}$  where the density of $Y$ is \emph{not defined}. Let $M(u) = |\Au|$. We now state some useful lemmas whose proofs are provided in Appendices~\ref{sec:proofoffinitenesslemma}, \ref{sec:proofofneighborhoodlemma} and \ref{sec:proofofdenistylemma} respectively.

\begin{lemma}{[}Finiteness Lemma{]} 
\label{lemma:finitenessLemma} $\Au$ is finite for every $u\notin\C$.
\end{lemma}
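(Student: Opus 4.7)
\textbf{Proof plan for Lemma~\ref{lemma:finitenessLemma}.} The strategy is to show that every point of $\Au$ is an isolated point of $\Au$, and then invoke compactness of $\Omega = [0,L]$ to upgrade ``locally finite'' to ``globally finite.''

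First I would unpack what the hypothesis $u \notin \C$ buys. By the definition of $\C$ in~(\ref{eq:setC}), we have $u \neq s(0)$ and $u \neq s(L)$, so neither endpoint lies in $\Au$; and $u \neq s(x)$ for every $x \in \mathcal{B}$, which combined with the definition~(\ref{eq:setB}) forces $S''(x) = s'(x) \neq 0$ at every $x \in \Au$. Thus every element of $\Au$ is an interior point of $\Omega$ at which $s$ has nonzero derivative.

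Next, given any $x_0 \in \Au$, the continuity of $s' = S''$ (guaranteed by the twice continuous differentiability of $S$ assumed in Section~\ref{sec:Assumption}) and $s'(x_0) \neq 0$ imply that $s'$ does not change sign on some open neighborhood $U$ of $x_0$ contained in $(0,L)$. Hence $s$ is strictly monotone on $U$, so $s^{-1}(\{u\}) \cap U = \{x_0\}$. This shows $x_0$ is isolated in $\Au$.

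Finally I would argue by contradiction. Suppose $\Au$ were infinite. Since $\Au \subset [0,L]$ is an infinite subset of a compact set, Bolzano--Weierstrass supplies an accumulation point $x^* \in [0,L]$ of $\Au$, and continuity of $s$ (one-sided at the endpoints) together with $s(x_n) = u$ for a sequence $x_n \to x^*$ yields $s(x^*) = u$, so $x^* \in \Au$ in the sense of the one-sided convention used in the paper. If $x^* \in \{0, L\}$, then $u \in \{s(0), s(L)\} \subset \C$, contradicting $u \notin \C$. If $x^* \in (0,L)$, then by the previous paragraph $x^*$ is isolated in $\Au$, contradicting its being an accumulation point. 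Either way we reach a contradiction, so $\Au$ is finite.

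The only subtle step is the handling of the endpoints: one must remember that $s(0)$ and $s(L)$ are defined via the one-sided limits prescribed after~(\ref{eq:setA}), and that $\C$ was deliberately enlarged to include these two values precisely to exclude the pathology where an accumulation point of $\Au$ escapes to an endpoint. Once that bookkeeping is clean, the lemma is essentially an inverse-function-theorem plus compactness argument and no further obstacle remains.
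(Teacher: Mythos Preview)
Your argument is correct and follows essentially the same route as the paper: assume $\Au$ infinite, extract a limit point $x^* \in \Omega$ by compactness, use continuity of $s$ to place $x^*$ in $\Au$, then derive a contradiction with $u \notin \C$ by examining $s'(x^*)$. The only cosmetic difference is that the paper obtains $s'(x^*)=0$ directly from the difference quotient $\lim_{n\to\infty}\frac{s(x^*)-s(x_n)}{x^*-x_n}=0$, whereas you invoke local strict monotonicity (from $s'(x^*)\neq 0$ and continuity of $s'$) to show $x^*$ would be isolated; these are two phrasings of the same observation.
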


As we see from Lemma~\ref{lemma:finitenessLemma} above that for a given $u \notin \C$, there is only a \emph{finite} collection of $x \in \Omega$ that maps to $u$ under the function $s$. The inverse map $s^{(-1)}(u)$ which identifies the set of $x \in \Omega$ that maps to $u$ under $s$ is ill-defined as a function as it is a one to many mapping. The objective of the following Lemma~\ref{lemma:neighborhoodLemma} is to define appropriate neighborhoods such that the inverse function $s^{(-1)}$ when restricted to those neighborhoods is well-defined.

\begin{lemma}{[}Neighborhood Lemma{]}
\label{lemma:neighborhoodLemma} For every
$u \notin \C$, there exist a closed neighborhood $\Nalpha(u)$ around $u$ such that $\Nalpha(u) \bigcap\C$ is empty. Furthermore, if $M(u)=|\Au| >0$, $\Nalpha(u)$ can be chosen such that we can find a closed neighborhood $\Nalpha(x)$ around each $x \in \Au$ satisfying the following conditions:
\begin{enumerate}
\item $s\left(\Nalpha(x)\right) = \Nalpha(u)$.
\item $S^{\prime \prime}(y) \not=0, \forall y \in \Nalpha(x)$.
\item The inverse function $s^{(-1)}(u):\Nalpha(u) \rightarrow \Nalpha(x)$ is well-defined.
\item $S^{\prime \prime}$ is of constant sign in $\Nalpha(x)$.
\item $M(u)$ is constant in $\Nalpha(u)$.\\
\end{enumerate}
\end{lemma}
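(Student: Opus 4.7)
The strategy is to exploit the non-vanishing of $S''$ at every preimage of $u$ together with the finiteness of $\Au$ from Lemma~\ref{lemma:finitenessLemma} to locally invert $s$ around each stationary point, and then to shrink the target neighborhood of $u$ so that all five conditions hold simultaneously.

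First, I would observe that $\C$ is closed. The set $\mathcal{B}$ is the preimage of $\{0\}$ under the continuous function $S''$ on the compact interval $\Omega$, hence compact, so $s(\mathcal{B})$ is compact, and adjoining the two points $s(0), s(L)$ preserves closedness. For any $u \notin \C$ one can therefore pick a closed interval $\Nalpha(u)$ around $u$ lying in the open complement of $\C$; this establishes the first assertion.

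For the second part, assume $M(u) > 0$ and list $\Au = \{x_1, \ldots, x_{M(u)}\}$. Since $\C \supseteq \{s(0), s(L)\} \cup s(\mathcal{B})$ and $u \notin \C$, every $x_m$ lies in the open interval $(0,L)$ and satisfies $S''(x_m) \neq 0$. By continuity of $S''$ and finiteness of $\Au$, I would select pairwise disjoint closed intervals $V_m \subset (0,L)$ with $x_m$ in the interior $W_m \equiv \mathrm{int}(V_m)$, on which $S''$ keeps a single sign and is bounded away from zero; consequently $s$ is strictly monotone on each $V_m$ and $s(V_m)$ is a closed interval containing $u$ in its interior. Next I would pick a closed neighborhood of $u$ contained in $\bigcap_m s(V_m)$ and disjoint from $\C$, and set $\Nalpha(x_m) \equiv s^{-1}(\Nalpha(u)) \cap V_m$. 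Strict monotonicity of $s$ on $V_m$ makes each $\Nalpha(x_m)$ a closed subinterval of $V_m$ on which $s$ restricts to a continuous bijection onto $\Nalpha(u)$, immediately delivering Conditions 1--4.

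The substantive step is Condition 5. I would prove, via compactness, that for every open set $U \supseteq \Au$ there exists a neighborhood $N$ of $u$ with $s^{-1}(N) \subseteq U$: if not, one extracts a sequence $u_k \to u$ with preimages $y_k \in \Omega \setminus U$ satisfying $s(y_k) = u_k$, and any subsequential limit $y^\ast$ (using compactness of the closed set $\Omega \setminus U$) lies in $\Au$ yet also in $\Omega \setminus U$, contradicting $\Au \subset U$. Applying this with $U = \bigsqcup_m W_m$ and shrinking $\Nalpha(u)$ further if needed, every preimage of each $u' \in \Nalpha(u)$ lands in some $W_m \subset V_m$, and strict monotonicity on $V_m$ produces exactly one such preimage per $m$; hence $M(u') = M(u)$ throughout $\Nalpha(u)$. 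The chief obstacle is this juggling of two scales---the local intervals $V_m$ must be small enough for monotonicity and sign, while $\Nalpha(u)$ must be small enough both to lie inside every $s(V_m)$ and to have its full preimage captured in $\bigsqcup_m W_m$---but the finiteness of $M(u)$ makes a single round of shrinking sufficient.
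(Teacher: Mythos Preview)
Your proposal is correct and follows essentially the same approach as the paper: establish compactness of $\C$ via continuity of $S''$ and $s$, then invoke local invertibility of $s$ at each of the finitely many points of $\Au$. The paper's proof is terser---it simply attributes Conditions 1--5 to the inverse function theorem together with finiteness of $M(u)$---whereas you supply an explicit compactness argument for Condition~5 that the paper leaves implicit.
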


\begin{lemma}{[}Density Lemma{]} \label{lemma:densityLemma} The probability
density of $Y$ on $\mathbb{R}-\mathcal{C}$ exists and is given by
\begin{equation}
P(u)=\frac{1}{L}\sum_{m=1}^{M(u)}\frac{1}{\left|S^{\prime\prime}(x_m)\right|},
\label{eq:graddensity}
\end{equation}
where the summation is over $\mathcal{A}_{u}$ (which is the finite
set of locations $x_m \in\Omega$ where $s(x_m)=u$
as per Lemma~\ref{lemma:finitenessLemma}).
\end{lemma}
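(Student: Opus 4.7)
My plan is to compute the distribution function $F_Y(v) = \Pr(s(X) \leq v)$ in a neighborhood of $u$ by decomposing the preimage set $s^{(-1)}((-\infty, v])$ into local pieces on which $s$ is a monotone diffeomorphism, and then differentiate at $v = u$. This reduces the lemma to the standard one-dimensional change-of-variables formula applied one stationary point at a time, with the extra ingredient that a compactness argument is needed to ensure no preimages are missed outside the local charts.

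Fix $u \notin \C$, and let $\Au = \{x_{1}, \ldots, x_{M(u)}\}$, which is finite by Lemma~\ref{lemma:finitenessLemma}. Using Lemma~\ref{lemma:neighborhoodLemma}, I would select a closed neighborhood $\Nalpha(u)$ together with pairwise disjoint closed neighborhoods $\Nalpha(x_{m})$ of each stationary point such that each restriction $s : \Nalpha(x_{m}) \to \Nalpha(u)$ is a bijection with continuous inverse $s_{m}^{(-1)}$ and $S''$ of constant nonzero sign on $\Nalpha(x_{m})$. The first substantive step is then compactness: the set $K = \Omega \setminus \bigcup_{m} \mathrm{int}(\Nalpha(x_{m}))$ is compact and disjoint from $\Au$, so by continuity of $s$ the quantity $d \equiv \inf_{x \in K} |s(x) - u|$ is strictly positive. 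Shrinking $\Nalpha(u)$ to a radius below $d$ forces $s^{(-1)}(\Nalpha(u)) \subseteq \bigcup_{m} \Nalpha(x_{m})$. The case $M(u) = 0$ is handled by the same compactness argument with $K = \Omega$, giving $\Pr(Y \in \Nalpha(u)) = 0$ and hence $P(u) = 0$ consistent with the empty sum.

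With this reduction, for $v \in \Nalpha(u)$ with $v \geq u$, the constant-sign hypothesis on $S''$ makes $s$ strictly monotone on each $\Nalpha(x_{m})$, so the substitution $w = s(x)$ yields
\begin{equation*}
\Pr(u \leq Y \leq v) = \frac{1}{L} \sum_{m=1}^{M(u)} \mu\bigl(\{x \in \Nalpha(x_{m}) : u \leq s(x) \leq v\}\bigr) = \frac{1}{L} \sum_{m=1}^{M(u)} \int_{u}^{v} \frac{dw}{|S''(s_{m}^{(-1)}(w))|},
\end{equation*}
and symmetrically for $v \leq u$. Since $S'' \circ s_{m}^{(-1)}$ is continuous and nonvanishing at $w = u$ with value $S''(x_{m})$, the fundamental theorem of calculus gives differentiability of $F_{Y}$ at $u$ and the stated formula. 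The principal obstacle is the compactness step: the Neighborhood Lemma is purely local and by itself does not preclude additional preimages arising elsewhere in $\Omega$ as $v$ varies. Excluding $\C$ from the domain of $P$ is precisely what removes this obstruction, because on the complement of the local charts the continuous function $s$ stays uniformly bounded away from $u$.
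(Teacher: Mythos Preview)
Your proposal is correct and is precisely the change-of-variables computation the paper has in mind; the paper's own proof merely asserts that this is the standard density formula for a random-variable transformation $Y=s(X)$ with $X$ uniform on $\Omega$ and refers the reader to Billingsley for the details you supply. Your compactness step, while sound, is already subsumed by conclusion~5 of the Neighborhood Lemma (that $M(u)$ is constant on $\Nalpha(u)$), so you could invoke that directly and shorten the argument.
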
 

From Lemma~\ref{lemma:densityLemma} it is clear that the existence of the density function $P$ at a location $u \in \mathbb{R}$ necessitates that $S^{\prime \prime}(x) \not=0, \forall x \in \Au$. Since we are interested in the case where the density exists almost everywhere on $\mathbb{R}$, we impose the constraint that the set $\mathcal{B}$ in (\ref{eq:setB}), comprising of all points where $S^{\prime \prime}$ vanishes has a Lebesgue measure zero. It follows that $\mu(\mathcal{C})=0$. Though we have a closed form expression for $P(u)$ in (\ref{eq:graddensity}) it is generally hard to compute it directly as for choice of $u$ we need to laboriously determine the set $\mathcal{A}_{u}$. Our wave function approach totally circumvents this difficulty.

\section{Fourier Transform and its discrete version}
Below we define the various versions of the Fourier transforms used in this article. Though the definitions stated here might appear to be slightly different from the standard textbook definitions, it is fairly straightforward to establish their equivalence. We find it to be imperative to explicitly state them in this modified form as they aid in comprehending our results better.
\subsection{Discrete Fourier transform (DFT) and the scaled DFT}
The DFT of the function $\phi_{\tau}(y_n) $ is defined as
\begin{equation}
\label{def:DFT}
F^{D}(w_k) \equiv \delta \sum_{n=0}^{N-1} \phi_{\tau}^{D}(y_n)  \exp(-i 2 \pi w_k y_n)
\end{equation} 
where for $0 \leq k \leq N-1$,
\begin{equation}
\label{eq:freqwk}
w_k = \left\{ \begin{array}{ll}
              \frac{1}{N\delta}\left(k - \frac{N}{2}\right) & \mbox{if $N\%2=0$}; \\ & \\
              \frac{1}{N\delta}\left(k - \frac{N-1}{2}\right) & \mbox{if $N\%2=1$}.
          \end{array} \right.
\end{equation}
The shifts introduced in the definition of $w_k$ places the zero-frequency component at the center of the spectrum. The inverse DFT is given by
\begin{equation*}
\phi_{\tau}^{D}(y_n)  = \frac{1}{N \delta}\sum_{k=0}^{N-1} F^{D}(w_k)\exp(i 2 \pi w_k y_n).
\end{equation*}
For the subsequent analysis we assume that $N$ is even. It is worth emphasizing that $\frac{1}{N \delta}$ is the interval between the frequencies where the DFT values are defined. Traditionally, the definition of DFT and its inverse doesn\rq{}t explicitly include the sampling interval $\delta$ as it is generally unknown and set to 1. 

Let $u_k = 2 \pi \tau w_k, 0\leq k \leq N-1$ denote the scaled frequencies scaled by $ 2 \pi \tau$. For every value of $\tau>0$, we define the scaled DFT of $\phi_{\tau}^{D}$ and its associated discrete power spectrum as
\begin{align}
\label{def:scaledDFT}
F^D_{\tau}(u_k) &\equiv \frac{\delta}{\sqrt{2 \pi \tau}} \sum_{n=0}^{N-1} \phi_{\tau}^{D}(y_n) \exp\left(-\frac{i u_k y_n}{\tau}\right) \\
\label{def:scaledDPS}
P^D_{\tau}(u_k) &\equiv \left|F^D_{\tau}(u_k)\right|^2.
\end{align}
The scale factor $\frac{1}{\sqrt{2 \pi \tau}}$ compensates for the scaling of the frequencies $w_k$ by $2 \pi \tau$ leading to the following lemma whose proof is given in Appendix~\ref{sec:proofofscaledPSlemma}.
\begin{lemma}
\label{lemma:scaledPS}
The scaled DFT in (\ref{def:scaledDFT}) and (\ref{def:scaledDPS}) satisfies $\frac{2 \pi \tau}{N \delta} \sum_{k=0}^{N-1} P^D_{\tau}(u_k) = 1$.
\end{lemma}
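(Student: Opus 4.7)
The plan is to reduce the claim to Parseval's identity for the DFT, since $\phi_\tau^D$ has constant modulus. First I would use the change of variables $u_k = 2\pi\tau w_k$ to relate the scaled DFT in (\ref{def:scaledDFT}) directly to the ordinary DFT in (\ref{def:DFT}). Indeed, $\frac{u_k y_n}{\tau} = 2\pi w_k y_n$, so comparing the two definitions gives
\begin{equation*}
F_\tau^D(u_k) \;=\; \frac{1}{\sqrt{2\pi\tau}}\, F^D(w_k),
\end{equation*}
and hence $P_\tau^D(u_k) = \frac{1}{2\pi\tau}\lvert F^D(w_k)\rvert^2$. This already absorbs the factor $2\pi\tau$ appearing on the left-hand side of the claim, reducing the task to showing $\sum_{k=0}^{N-1}\lvert F^D(w_k)\rvert^2 = N\delta$.

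Next I would apply Parseval to $F^D$. Substituting $y_n = (n+\tfrac{1}{2})\delta$ and, for even $N$, $w_k = \frac{1}{N\delta}(k-\tfrac{N}{2})$ into the exponential $\exp(-i2\pi w_k y_n)$ produces $\exp(-i2\pi(k-N/2)(n+1/2)/N)$. The shifts by $N/2$ in $k$ and $1/2$ in $n$ contribute only unit-modulus phase factors that can be split off from the sum and drop out upon taking the squared magnitude; what remains is the standard $N$-point orthogonal basis with squared norm $N$. This yields
\begin{equation*}
\sum_{k=0}^{N-1}\lvert F^D(w_k)\rvert^2 \;=\; N\delta^{2}\sum_{n=0}^{N-1}\lvert \phi_\tau^D(y_n)\rvert^2.
\end{equation*}

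Finally, the definition (\ref{def:phidiscrete}) gives $\lvert\phi_\tau^D(y_n)\rvert^2 = 1/L$ at every sample, so $\sum_{n=0}^{N-1}\lvert\phi_\tau^D(y_n)\rvert^2 = N/L = 1/\delta$ using $L = N\delta$. Chaining the three steps, $\sum_{k=0}^{N-1}P_\tau^D(u_k) = \frac{N\delta}{2\pi\tau}$, which rearranges to the stated identity.

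The only subtlety is the bookkeeping of the $N/2$ and $1/2$ offsets in $w_k$ and $y_n$; once one verifies that these produce only overall unimodular factors and do not break orthogonality, the argument is essentially Parseval together with $\lvert\phi_\tau^D\rvert^2 \equiv 1/L$.
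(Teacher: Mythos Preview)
Your proposal is correct and follows essentially the same route as the paper: relate $F_\tau^D(u_k)=\tfrac{1}{\sqrt{2\pi\tau}}F^D(w_k)$, apply Parseval's identity to $F^D$, and use $|\phi_\tau^D(y_n)|^2=1/L$ with $L=N\delta$. The paper simply invokes Parseval directly, whereas you spell out the harmless unimodular phase factors coming from the $N/2$ and $1/2$ offsets; otherwise the arguments are identical.
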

\subsection{Scaled Fourier transform}
\label{sec:scaledFT}
By defining the following constants
\begin{align}
\rho & = \frac{\delta}{2} \nonumber \\
\label{def:rho1}
\rho_1 &= -\rho\\
\label{def:rho2}
\rho_2 &= L+\rho,
\end{align}
we construct a continuous function $\Hdelta(x) : \mathbb{R} \rightarrow [0,1]$ as follows:
\begin{equation}
\label{def:H}
\Hdelta(x) \equiv \left\{ \begin{array}{ll}
              1 & \mbox{if $x\in [0, L]$}; \\ & \\
              0 & \mbox{if $x \leq \rho_1$ or $x \geq \rho_2$}; \\ & \\
              \frac{x-\rho_1}{-\rho_1} & \mbox{if $\rho_1 \leq x \leq 0$}; \\& \\
              \frac{\rho_2-x}{\rho_2-L} & \mbox{if $L \leq x \leq \rho_2$}.
          \end{array} \right.
\end{equation}
Denote $\Idelta = (\rho_1, 0) \bigcup (L,\rho_2)$ of length $\delta$ where $\Hdelta$ is linearly interpolated. The linear interpolation guarantees that in $\Idelta $, $|\Hdelta^{\prime}(x)| = \frac{2}{\delta}$, a constant, and $\Hdelta^{\prime \prime}(x) = 0$ which will prove useful in our proof. Note that though $\Hdelta$ is continuous everywhere, $\Hdelta^{\prime}$ is discontinuous at the limit points of $\Idelta$. Using one sided limits the derivatives of $\Hdelta$ can be appropriately extended to the limit points of $\Idelta$.
We then use $\Hdelta$ to define the sinusoidal function  $\phi_{\tau}(x): \mathbb{R} \rightarrow \mathbb{C}$ for all $\tau > 0$ as
\begin{equation}
\label{def:phi}
\phi_{\tau}(x) \equiv \Hdelta(x)  \frac{1}{\sqrt{L}} \exp\left(\frac{i S(x)}{\tau}\right).
\end{equation}
where we extend $S$ beyond the precincts of $[0,L]$ such that $s(x) = s(0), \forall x \in [\rho_1,0]$ and $s(x) = s(L), \forall x \in [L,\rho_2]$. As $\C$ includes $s(0)$ and $s(L)$, the aforementioned extension would ensure that $\forall u \in \mathbb{R}-\C$, $\Au \bigcap \Idelta = \emptyset$ as the interval $\Idelta$ is artificially introduced and should not interfere with the computation of the density. As $\Hdelta(x)$ is identically zero outside $[\rho_1,\rho_2]$ any extension of $S$ outside $[\rho_1,\rho_2]$ will not impact $\phi_{\tau}(x)$. Imposing $\Hdelta(x)=1$ between $y_0 = \frac{\delta}{2}$ and $y_{N-1} = L-\frac{\delta}{2}$ where the $N$ samples of $S$ are confined assures that
\begin{equation}
\label{eq:relationtophiD}
\phi_{\tau}(y_n) = \phi_{\tau}^{D}(y_n).
\end{equation}

Our subsequent analysis requires that $\phi_{\tau}(x)$ \emph{vanishes} at the end points $\rho_1$ and $\rho_2$ and also satisfies (\ref{eq:relationtophiD}) at the sample locations $y_n$. $\phi_{\tau}^{D}(0) \not= 0$ precludes us from setting $\rho = 0$. The non-zero choice of $\rho$ entails the introduction of the function $\Hdelta$ with properties as described in (\ref{def:H}). Setting $\rho=\frac{\delta}{2}$ and imposing $\Hdelta(x) = 0, \forall x \notin (\rho_1,\rho_2)$ forces $\phi_{\tau}\left((n+\frac{1}{2})\delta\right) = 0$ when $n \leq -1$ or $n \geq N$. This ensures that the Discrete Time Fourier Transform (DTFT) defined as
\begin{equation}
\label{def:DTFT}
F^{DTFT}(w) \equiv \delta \sum_{n=-\infty}^{\infty} \phi_{\tau}^{D}(y_n)  \exp(-i 2 \pi w y_n), \hspace{10pt} y_n = \left(n+\frac{1}{2}\right) \delta
\end{equation}
involving infinite summation \emph{coincides} with the DFT expression---stated in (\ref{def:DFT})---that comprises of only $N$ finitely many summations at the frequencies $w_k$, i.e., 
\begin{equation}
\label{eq:equalityDTFTDFT}
F^{DTFT}(w_k) = F^{D}(w_k), \forall k.
\end{equation}
In Section~\ref{sec:DFTFTrelation} we will see that enabling this equality will help us relate the DFT with the Fourier transform through the Poisson summation formula \cite{Elias71}. The constants $\rho_1$ and $\rho_2$ defined in (\ref{def:rho1}) and (\ref{def:rho2}) respectively ascertains that $\Hdelta(x)=1, \forall x \in \bigcup\limits_{k=0}^{N-1}\Auk$ thereby obstructing
$\Hdelta(x)$ from exercising any influence on the density of $s$ at the frequencies $u_k$. The definition of $\Hdelta(x)$ is totally left to our discretion and can be flexed to incorporate any desirable properties.

Counterpart to the discrete versions given in (\ref{def:DFT}) and (\ref{def:scaledDFT}) we define the Fourier transform (FT) and the scaled FT of $\phi_{\tau}(x)$ as
\begin{align}
\label{def:FT}
F(w) &= \int\limits_{\rho_1}^{\rho_2} \phi_{\tau}(x) \exp\left(-i 2 \pi w x\right) \, dx, \\
\label{def:scaledFT}
F_{\tau}(u) &= \frac{1}{\sqrt{2 \pi \tau}} \int\limits_{\rho_1}^{\rho_2} \phi_{\tau}(x) \exp\left(-\frac{i u x}{\tau}\right) \, dx
\end{align}
where again by relating $u = 2 \pi \tau w$ we get $F(w) = \sqrt{2 \pi \tau} F_{\tau}(u)$ akin to (\ref{eq:DFTscaledDFTrelation}).

\subsection{Relating the scaled DFT and the scaled Fourier transfrom}
\label{sec:DFTFTrelation}
The Poisson summation formula relates the DTFT with the Fourier transform ($F (w)$) where DTFT is just the periodic summation of $F(w)$ shifted by $\frac{1}{\delta}$ \cite{Elias71}. Using (\ref{eq:equalityDTFTDFT}), the Poisson summation formula can be leveraged to relate the DFT and the Fourier transform at these frequencies $w_k$, specifically
\begin{equation*}
F^D(w_k) = F^{DTFT}(w_k)= \sum_{l=-\infty}^{\infty} F\left(w_k - \frac{l}{\delta}\right).
\end{equation*}
Defining 
\begin{equation}
\label{def:gammal}
\gamma_l \equiv \frac{2 \pi \tau l}{\delta}
 \end{equation}
and using the scaled versions of the DFT and the Fourier transforms we get
\begin{equation}
\label{eq:relationDFTAndFT}
F^D_{\tau}(u_k) = F^{DTFT}_{\tau}(u_k)  = F_{\tau}(u_k) + \sum_{l=-\infty, l \not=0}^{\infty}F_{\tau}(u_k-\gamma_l)
\end{equation}
where $ F^{DTFT}_{\tau}(u)$ is the scaled DTFT given by
\begin{equation}
\label{def:scaledDTFT}
F^{DTFT}_{\tau}(u) \equiv \frac{F^{DTFT}(w)}{\sqrt{2 \pi \tau}}
\end{equation}
at $u = 2 \pi \tau w$ and is defined for all $u$. The infinite summation $\sum_{l=-\infty, l \not=0}^{\infty}F_{\tau}(u_k-\gamma_l)$ is known as the  \emph{aliasing error} \cite{Bracewell99}.

\section{Bound on $\tau$}
\label{sec:taubound}
As $s$ is also continuous on a compact interval $[\rho_1, \rho_2]$, the image of $s$ is also compact and hence bounded. Pick an arbitrarily small $\beta > 0$ and let 
\begin{equation}
\label{def:B}
B \equiv \sup\limits_{x \in [0,L]}|s(x)|+\beta
\end{equation}
such that $|s(x)|< B, \forall x \in [0,L]$. From (\ref{eq:freqwk}) note that $\max\limits_{k=0}^{N-1} |w_k| =  \frac{1}{2 \delta}$, hence $ \max\limits_{k=0}^{N-1} |u_k| =  \frac{\pi \tau}{\delta}$ and from the definition of $B$ in (\ref{def:B}) we have $\sup\limits_{x \in [0,L]}|s(x)| = B-\beta$ where $\beta>0$ is an arbitrarily small quantity. In the definition of $F^D_{\tau}(u_k)$ in (\ref{def:scaledDFT}) the frequencies $u_k$ are the histogram bins for the derivatives $s(x)$ where they are related by $u_{k}\leq s(y_n)<u_{k+1}$. Then, in order to \emph{capture} all the derivatives $\tau$ needs to be chosen such that
\begin{equation}
\label{eq:taubound}
\tau \geq \frac{B \delta}{\pi} = \left(\frac{B L}{\pi}\right) \frac{1}{N}.
\end{equation}
In the subsequent sections we will reason that the prudent choice of $\tau$ equals $\frac{B \delta}{\pi N}$. The \emph{linearity} of the relation between the free parameter $\tau$ and the sample interval $\delta$ at this value will prove crucial in obtaining the sought after $O(1/N)$ approximation of our density estimation technique. For now we let 
\begin{equation*}
\tau = \frac{C B \delta}{\pi}
\end{equation*}
for some constant $C \geq 1$.

\section{Bound on the aliasing error}
We start with the following lemma whose proof is given in Appendix~\ref{sec:proofofnostationarypointslemma}. Recall the definition of $T(x;u)$ in~(\ref{def:Txu}).
\begin{lemma}{[}No stationary points{]}
\label{lemma:nostationarypoints}
On the interval $[b_1,b_2] \subseteq [\rho_1,\rho_2]$ consider the integral
\begin{equation*}
W_{\tau}(u) = \int\limits_{b_1}^{b_2} \Hdelta(x) \exp\left(\frac{iT(x;u)}{\tau}\right) \, dx
\end{equation*}
under the condition that there exist a constant $\xi > 0$ such $|T^{\prime}(x;u)| \geq \xi , \forall x \in [b_1,b_2]$ implying the absence of any stationary points.
Then $W_{\tau}(u)  = O(\tau)$.
\end{lemma}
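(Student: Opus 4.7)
The natural approach is a single integration by parts of the ``non-stationary phase'' type, exploiting the fact that the oscillating exponential can be written as a derivative:
\[
\exp\!\left(\frac{iT(x;u)}{\tau}\right) \;=\; \frac{\tau}{i\,T'(x;u)} \,\frac{d}{dx}\exp\!\left(\frac{iT(x;u)}{\tau}\right),
\]
which is legitimate on $[b_1,b_2]$ because $|T'(x;u)|\ge\xi>0$ there, so the factor $1/T'(x;u)$ is smooth and bounded. Substituting this identity into $W_\tau(u)$ and integrating by parts produces a boundary term plus an interior integral, each carrying a factor of $\tau$ out front.

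\textbf{Boundary term.} After the integration by parts the boundary piece is
\[
\left[\,\frac{\tau\,\Hdelta(x)}{i\,T'(x;u)}\exp\!\left(\frac{iT(x;u)}{\tau}\right)\right]_{b_1}^{b_2},
\]
whose modulus is at most $2\tau/\xi$ since $|\Hdelta|\le 1$ and $|T'|\ge\xi$. This is manifestly $O(\tau)$.

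\textbf{Interior integral.} The remaining term is $-\int_{b_1}^{b_2} \frac{d}{dx}\!\left(\frac{\tau\,\Hdelta(x)}{i\,T'(x;u)}\right)\exp(iT/\tau)\,dx$, and I would split $\frac{d}{dx}\!\bigl(\Hdelta/T'\bigr) = \Hdelta'/T' \;-\; \Hdelta\,T''/(T')^2$. Since $T''(x;u)=S''(x)$ is continuous on the compact set $[\rho_1,\rho_2]$ (with $S$ extended so that $S''\equiv 0$ on $\Idelta$), it is bounded by some $M$ independent of $\tau$, so the second piece is dominated by $\tau M(\rho_2-\rho_1)/\xi^2 = O(\tau)$.

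\textbf{Main obstacle.} The delicate piece is the $\Hdelta'$ contribution, because $\Hdelta'$ vanishes on $[0,L]$ but equals $\pm 2/\delta$ on $\Idelta$, which blows up as $\delta\to0$. This is exactly where the linear relation $\tau = CB\delta/\pi$ from Section~\ref{sec:taubound} is essential: on $\Idelta$, which has total length $\delta$,
\[
\left|\int_{\Idelta}\frac{\tau\,\Hdelta'(x)}{i\,T'(x;u)}\exp\!\left(\frac{iT(x;u)}{\tau}\right)dx\right|
\;\le\; \frac{\tau}{\xi}\cdot\frac{2}{\delta}\cdot\delta \;=\; \frac{2\tau}{\xi},
\]
so the divergence of $\Hdelta'$ is absorbed by the short length of $\Idelta$, and the product remains $O(\tau)$. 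Adding the three $O(\tau)$ contributions (boundary, $\Hdelta'$ part, $S''$ part) gives $W_\tau(u)=O(\tau)$ with a bounding ``constant'' of the form $c_1/\xi + c_2/\xi^2$ depending only on $L$ and $\|S''\|_\infty$, fitting the paper's definition of $O(\tau)$.
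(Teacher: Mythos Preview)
Your argument is correct and proves the lemma as stated. Two small cleanups: first, $S''$ need not be \emph{continuous} across the seams at $x=0$ and $x=L$ (the extension forces $S''\equiv 0$ on $\Idelta$ while $S''(0^{+})$ is typically nonzero), but it is bounded on $[\rho_1,\rho_2]$, which is all your estimate actually uses. Second, your handling of the $\Hdelta'$ piece does \emph{not} in fact require the relation $\tau=CB\delta/\pi$: in your displayed inequality the $1/\delta$ from $|\Hdelta'|$ and the $\delta$ from $|\Idelta|$ cancel directly, yielding $2\tau/\xi$ regardless of how $\tau$ and $\delta$ are coupled.

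Your route is genuinely more elementary than the paper's. You bound each piece of the interior integral by the triangle inequality and stop. The paper instead (i) splits at $x=0$ and $x=L$, (ii) on $\Idelta$ exploits that $s(x)$ is constant there to evaluate the $\Hdelta'$-integral \emph{exactly}---this exact evaluation produces a $\tau^2/\delta$ factor, and it is here that the ratio $\tau/\delta=CB/\pi$ genuinely enters, yielding the explicit prefactors $2CB\tau/\pi[s(\cdot)-u]^{-2}$---and (iii) on $[0,L]$ integrates by parts a second time to sharpen the $S''$-contribution to $O(\tau^2/\xi^3)$. The payoff of the paper's longer computation is an explicit four-line representation of $W_\tau(u)$ (boundary terms at $b_1,b_2$; edge terms at $0$ and $L$; an $O(\tau^2/\xi^3)$ remainder) that is reused verbatim downstream: in the aliasing-error expansion, in the formula for $\epsilon_{1,\tau}$, and in setting up Lemma~\ref{lemma:productnostationarypoints}. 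Your crude bound suffices for the bare $O(\tau)$ conclusion of the lemma but would not furnish those explicit terms.
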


For $u \in \left[ \frac{-\pi \tau}{\delta}, \frac{\pi \tau}{\delta} \right]$ we now obtain a bound on the aliasing error as a function of $\tau$. When $\tau$ satisfies (\ref{eq:taubound}), $|\gamma_l| \geq 2 B |l|$ and
\begin{align}
|u-\gamma_l| \geq |\gamma_l| -|u| &\geq \frac{2 \pi \tau}{\delta} \left(|l|-\frac{1}{2}\right) \nonumber \\
\label{eq:diffgreaterbound}
&\geq 2 B \left(|l|-\frac{1}{2}\right) \geq B|l|.
\end{align}
Therefore,
\begin{align}
|T^{\prime}(x;u-\gamma_l)| &=|s(x)-(u-\gamma_l)| \nonumber \\ 
\label{eq:derivativebound}
						&\geq |u-\gamma_l| - |s(x)| \geq B(|l|-1) +\beta > 0
\end{align} 
indicating that the integral in computing $F_{\tau}(u_k-\gamma_l)$ is devoid of any stationary points. Applying Lemma~\ref{lemma:nostationarypoints} and recalling that
$\Hdelta(x)=0$ at the end points $\rho_1$ and $\rho_2$ by construction, the expression on the right side of (\ref{eq:Ifinal1}) vanishes. The remaining terms gives us
\begin{align}
\label{eq:aesummation1}
\sqrt{2 \pi L} F_{\tau}(u-\gamma_l) = & \frac{2\sqrt{\tau} CB}{\pi \left[s(0)-(u-\gamma_l)\right]^2} \left[\exp\left(\frac{i T(0;u-\gamma_l)}{\tau}\right)- \exp\left(\frac{iT(\rho_1;u-\gamma_l)}{\tau}\right)\right]\\
\label{eq:aesummation2}
&+\frac{2\sqrt{\tau} CB}{\pi \left[s(L)-(u-\gamma_l)\right]^2}\left[\exp\left(\frac{i T(L;u-\gamma_l)}{\tau}\right)-\exp\left(\frac{i T(\rho_2;u-\gamma_l)}{\tau}\right)\right]\\
\label{eq:aeerrorbound}
&+O\left(\frac{\tau \sqrt{\tau}}{[B(|l|-1) +\beta]^3}\right).
\end{align}
Realizing that
\begin{equation*}
\sum_{l=-\infty, l\not=0}^{\infty} \frac{1}{|l|^2} < \infty,
\end{equation*}
the \emph{infinite} summation of each of the term in (\ref{eq:aesummation1}) and (\ref{eq:aesummation2}) converges individually. The total aliasing error then satisfies
\begin{equation}
\label{bound:aliasingerror}
\sum_{l=-\infty, l \not=0}^{\infty} F_{\tau}(u-\gamma_l) = O(\sqrt{\tau}).
\end{equation}
In particular
\begin{equation}
\label{bound:aliasingerrorlthterm}
F_{\tau}(u-\gamma_l) = O\left(\frac{\sqrt{\tau}}{[B(|l|-1) +\beta]^2}\right).
\end{equation}

\section{Evaluation of Fourier transform via the method of stationary phase}
\label{sec:FTevaluation}
We now employ the stationary phase approximation technique \cite{OlverBook74,OlverArticle74} to obtain the asymptotic expression for the Fourier transform defined in (\ref{def:scaledFT}). We expand the scope of $\ut$ beyond the finite set of $N$ frequencies $\ut \in \{u_k\}_{k=0}^{N-1}$ where the scaled DFT values are defined to any $\tilde{u} \in  \mathbb{R} - \C$.

If no stationary point exits in $\Omega$ for the given $\tilde{u}$ ($|\Aut|=0$), then pursuant to Lemma~\ref{lemma:nostationarypoints} we have $F_{\tau}(\ut) = O(\sqrt{\tau})$. Otherwise, let the finite set $\Aut$ be represented by $\Aut=\{x_{1},x_{2},\ldots,x_{M(\ut)}\}$ with
$x_{m}<x_{m+1},\forall m$. We break $[\rho_1,\rho_2]$ into disjoint intervals
such that each interval has utmost one stationary point. To this end,
we choose numbers $\{c_0,c_{1},\ldots,c_{M(\ut)}\}$ such that
$\rho_1<c_{0}<x_{1}$, $x_{m}<c_{m}<x_{m+1}$ and $x_{M(\ut)}<c_{M(\ut)}<\rho_2$. We set $c_0 = 0$ and $c_{M(\ut)} =  L$ so that
the open interval $(c_0,c_{M(\ut)})$ encompasses all stationary points in $\Aut$. The choice of other constants will be discussed below. Recall that by definition $\Hdelta(x) = 1, \forall x \in [c_0, c_{M(\ut)}]$. The scaled Fourier transform $F_{\tau}(\ut)$ can be broken into:
\begin{equation}
F_{\tau}(\ut)\sqrt{2\pi\tau L}=G_1(\ut)+G_2(\ut)+\sum_{m=1}^{M(\ut)}K_m(\ut)+\tilde{K}_m(\ut)
\label{eq:Ftaubroken}
\end{equation}
where
\begin{align}
\label{def:G1}
G_{1,\tau}(\ut) & \equiv  \int_{\rho_1}^{c_0}\Hdelta(x)\exp\left(\frac{iT(x;\ut)}{\tau}\right)\, dx, \\
\label{def:G2}
G_{2,\tau}(\ut) &\equiv   \int_{c_{M(\ut)}}^{\rho_2}\Hdelta(x)\exp\left(\frac{iT(x;\ut)}{\tau}\right)\, dx, \\
\label{def:Km}
K_{m,\tau}(\ut) & \equiv   \int_{x_m}^{c_m}\exp\left(\frac{iT(x;\ut)}{\tau}\right) \, dx,\,\mathrm{and}\, \\
\label{def:tildaKm}
\tilde{K}_{m,\tau}(\ut) & \equiv   \int_{c_{m-1}}^{x_m}\exp\left(\frac{iT(x;\ut)}{\tau}\right)\, dx.
\end{align}
Evaluating  $K_{m,\tau}(\ut)$ and $\tilde{K}_{m,\tau}(\ut)$ using the method of stationary phase (\cite{OlverArticle74}, Chapter~3, Article~13 in \cite{OlverBook74}), (\ref{eq:Ftaubroken}) can be expressed as
\begin{align*}
F_{\tau}(\ut)\sqrt{2\pi\tau L} & =  \sum_{m=1}^{M(\ut)}\exp\left(\frac{i}{\tau}\left[S(x_{m})-\ut x_{m}\right]\right)\sqrt{\frac{2\pi\tau}{\left|S^{\prime\prime}(x_{m})\right|}}\exp\left(\pm\frac{i\pi}{4}\right)\nonumber \\
 & +\epsilon_{1,\tau}(\ut)+\epsilon_{2,\tau}(\ut).
 \end{align*}
 Depending on whether $S^{\prime \prime}(x_m)> 0$ or $<0$, the factor $\frac{i \pi}{4}$ in the exponent is positive or negative respectively.
 The error $\epsilon_{1,\tau}(\ut) \equiv G_{1,\tau}(\ut) + G_{2,\tau}(\ut)$ stems from computing the integral $F_{\tau}(\ut)$ on $[\rho_1,c_0] \bigcup [c_{M(\ut)}, \rho_2]$ which doesn\rq{}t contain any stationary points. Pursuant to Lemma~\ref{lemma:nostationarypoints}, $\epsilon_{1,\tau}(\ut) = O(\tau)$. Using the facts that $\Hdelta(x)=0$ at $x \in \{\rho_1, \rho_2\}$ and $\Hdelta(x)=1$ at $x \in \{c_0,c_{M(\ut)}\}$ in (\ref{eq:Ifinal1}), we get
 \begin{align}
 \label{eq:e1errorexpressed}
 \epsilon_{1,\tau}(\ut) = &-i\tau \frac{\exp\left(\frac{i T(c_0;\ut)}{\tau}\right)}{s(c_0)-\ut} + i\tau \frac{\exp\left(\frac{i T\left(c_{M(\ut)};\ut\right)}{\tau}\right)}{s\left(c_{M(\ut)}\right)-\ut} \nonumber \\
 &+\frac{2\tau CB}{\pi \left[s(0)-\ut \right]^2}\left[\exp\left(\frac{i T(0;\ut)}{\tau}\right) - \exp\left(\frac{i T(\rho_1;\ut)}{\tau}\right)\right] \nonumber \\
 &+\frac{2\tau CB}{\pi \left[s(L)-\ut \right]^2}\left[\exp\left(\frac{i T(L;\ut)}{\tau}\right) - \exp\left(\frac{i T(\rho_2;\ut)}{\tau}\right)\right]. \nonumber \\
 \end{align}
As the integral in $\epsilon_{1,\tau}(\ut)$ \emph{excludes} the interval $[0,L]$, the bound appearing in (\ref{eq:Ifinal4}) has been deliberately omitted.
 $\epsilon_{2,\tau}(\ut)$ represents the error from the stationary phase approximation and is derived to be (\ref{eq:e2errorexpressed}) in Appendix~\ref{sec:e2bound}. Using these error bounds we get
 \begin{align}
 \label{eq:Ftau_approx}
F_{\tau}(\ut) = \frac{1}{\sqrt{L}}\sum_{m=1}^{M(\ut)}\frac{\exp\left(\frac{i}{\tau}\left[S(x_m)-\ut x_m\right]\right)}{\sqrt{\left|S^{\prime\prime}(x_{m})\right|}}\exp\left(\pm\frac{i\pi}{4}\right)+\epsilon_{3,\tau}(\ut)
 \end{align}
 where 
 \begin{align}
 \label{eq:e3}
 \epsilon_{3,\tau}(\ut)=&\frac{\epsilon_{1,\tau}(\ut)+\epsilon_{2,\tau}(\ut)}{\sqrt{2\pi\tau L}} \\
 \label{eq:e3mainterm1}
 &=\frac{2\sqrt{\tau}CB}{\pi \sqrt{2 \pi L} \left[s(0)-\ut\right]^2}\left[\exp\left(\frac{i T(0;\ut)}{\tau}\right) - \exp\left(\frac{i T(\rho_1;\ut)}{\tau}\right)\right]\\
  \label{eq:e3mainterm2}
 &+\frac{2\sqrt{\tau}CB}{\pi \sqrt{2 \pi L} \left[s(L)-\ut\right]^2}\left[\exp\left(\frac{i T(L;\ut)}{\tau}\right) - \exp\left(\frac{i T(\rho_2;\ut)}{\tau}\right)\right] \\
\label{eq:e3error}
&+ O(\tau) \\
\label{bound:e3}
&=O(\sqrt{\tau}).
 \end{align}
To understand the bound of $O(\tau)$ for $ \epsilon_{3,\tau}(\ut)$ in (\ref{eq:e3error}) note that when we combine (\ref{eq:e1errorexpressed}) and (\ref{eq:e2errorexpressed}) to add the error terms $\epsilon_{1,\tau}(\ut)$ and $\epsilon_{2,\tau}(\ut)$, all the phase terms containing the constants $c_0$ and $c_{M(\ut)}$ \emph{cancel} each other. The remainder error term $O(\tau \sqrt{\tau})$ when divided by $\sqrt{\tau}$ appearing in the denominator of (\ref{eq:e3}) results in a bound of $O(\tau)$.
 
The scaled power spectrum $P_{\tau}(\ut) \equiv |F_{\tau}(\ut)|^2$ equals
\begin{align}
P_{\tau}(\ut) = & \frac{1}{L}\sum_{m=1}^{M(\ut)}\frac{1}{\left|S^{\prime\prime}(x_m)\right|} +\frac{1}{L}\sum_{m=1}^{M(u)}\sum_{t=1;t\not=m}^{M(\ut)}\chi_{m,t,\tau}(x_m,x_t,\ut) \nonumber \\
 \label{eq:Ph}
 & +\left|\epsilon_{3,\tau}(\ut)\right|^2+\epsilon_{4,\tau}(\ut) + \overline{\epsilon_{4,\tau}(\ut)}
 \end{align}
  where 
\begin{align}
\label{eq:crossxmxt}
\chi_{m,t,\tau}(x_m,x_t,\ut) &= \frac{\cos\left(\frac{1}{\tau}\left[S(x_m)-S(x_t)-\ut(x_m-x_t)\right]+\theta_{m,t}(x_m,x_t)\right)}{\sqrt{\left|S^{\prime\prime}(x_m)\right|}\sqrt{\left|S^{\prime\prime}(x_t)\right|}}, \\
\label{eq:e4}
\epsilon_{4,\tau}(\ut) &= \overline{\epsilon_{3,\tau}(\ut)} \frac{1}{\sqrt{L}}\sum_{m=1}^{M(\ut)}\frac{\exp\left(\frac{i}{\tau}\left[S(x_m)-\ut x_m\right]\right)}{\sqrt{\left|S^{\prime\prime}(x_{m})\right|}}\exp\left(\pm\frac{i\pi}{4}\right).
\end{align}
The cross terms $\chi_{m,t,\tau}(x_m,x_t,\ut)$ germinates from having multiple spatial locations ($x_m, x_t$) index into the same frequency bin $\ut$. 
Additionally, $\theta_{m,t}(x_m,x_t)=0,\,\frac{\pi}{2}$ or $-\frac{\pi}{2}$ and $\theta_{m,t}(x_t,x_m)=-\theta_{m,t}(x_m,x_t)$.

\section{Approximation error of our wave function method}
To keep up with our analysis for any $\ut \in \mathbb{R}-\C$ rather than confining to the set of $N$ scaled frequencies $\{u_k\}_{k=0}^{N-1}$, we use the scaled DTFT instead of the scaled DFT. Let $P^{DTFT}_{\tau}(\ut)$ represent the magnitude square of the scaled DTFT.  Substituting $\tau = \frac{C B \delta}{\pi}$, observe that the scaled frequencies lie between $\left[-CB, CB\right]$ for all $N$ where $C \geq 1$.  Additionally, as $|s(x)| < B, \forall x \in \Omega$ by definition, the true density $P(\tilde{u})=0, \forall \tilde{u} \notin (-B,B)$. So we restrict ourselves to the interesting region where $\ut \in \left[-CB, CB\right] - \C$. Recollect that we explicitly avoid the set $\C$ where the density $P(\tilde{u})$ is not defined. The formal mathematical statement of our result can be stated as follows:
\begin{theorem}
\label{thm:derivativedensity}
For any $\tilde{u} \in  [-B,B] - \C$, there exists a closed interval $\mathcal{N}_{\alpha}(\tilde{u}) = [\tilde{u}-\tilde{\alpha}, \tilde{u}+\tilde{\alpha}]$ with $\tilde{\alpha}$ chosen independent of $N$---as given by Lemma~\ref{lemma:neighborhoodLemma}---such that when $\tau =  \left(\frac{B L}{\pi}\right) \frac{1}{N}$, the cumulative of the difference $P^{DTFT}_{\tau}(u) - P(u)$ over $\Nalphat(\ut)$ is of $O(1/N)$ as $N \rightarrow \infty$, i.e,
\begin{equation}
\label{eq:integralconvergence}
\int\limits_{\Nalphat(\ut)}\left[P^{DTFT}_{\tau}(u)- P(u)\right]\, du =  O\left(\frac{1}{N}\right).
\end{equation}
\end{theorem}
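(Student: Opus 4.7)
The plan is to decompose $P^{DTFT}_{\tau}(u) - P(u)$ into four families of oscillatory and/or vanishing pieces, integrate each family over $\Nalphat(\ut)$, and show every contribution is $O(\tau)$; with $\tau = BL/(\pi N)$ this yields the claimed $O(1/N)$ rate. Specifically, extending (\ref{eq:relationDFTAndFT}) via (\ref{def:scaledDTFT}) to all $u \in \mathbb{R}-\C$, write $F^{DTFT}_{\tau}(u) = F_{\tau}(u) + E_{a}(u)$ with $E_{a}(u) \equiv \sum_{l \neq 0}F_{\tau}(u-\gamma_{l})$. Squaring and subtracting $P(u)$ yields
\[
P^{DTFT}_{\tau}(u) - P(u) = \bigl[P_{\tau}(u) - P(u)\bigr] + 2\,\mathrm{Re}\!\left(F_{\tau}(u)\,\overline{E_{a}(u)}\right) + |E_{a}(u)|^{2},
\]
and inserting (\ref{eq:Ph}) splits $P_{\tau}(u) - P(u)$ into the cross-phase family $L^{-1}\sum_{m \neq t}\chi_{m,t,\tau}$ plus the stationary-phase residue $|\epsilon_{3,\tau}|^{2} + \epsilon_{4,\tau} + \overline{\epsilon_{4,\tau}}$. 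The proof then reduces to bounding the integrals of these four families.

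For the cross-phase family I shrink $\tilde{\alpha}$ using Lemma~\ref{lemma:neighborhoodLemma} so that each stationary point becomes a $C^{1}$ branch $x_{m}(u)$ defined on all of $\Nalphat(\ut)$ with $s(x_{m}(u))=u$ and $M(u)=M(\ut)$ fixed. Implicit differentiation, using $s(x_{m})=u$, collapses the phase $\phi_{m,t}(u) \equiv S(x_{m}(u)) - S(x_{t}(u)) - u(x_{m}(u)-x_{t}(u))$ to $\phi_{m,t}'(u) = -(x_{m}(u)-x_{t}(u))$, which is uniformly bounded away from $0$ on the compact $\Nalphat$ because $x_{m}(\ut) \neq x_{t}(\ut)$. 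A single integration by parts against $\cos(\phi_{m,t}(u)/\tau + \theta_{m,t})$ therefore extracts one power of $\tau$, giving $\int_{\Nalphat}\chi_{m,t,\tau}\,du = O(\tau)$. The residue pieces are handled analogously: $|\epsilon_{3,\tau}|^{2}$ is pointwise $O(\tau)$ by (\ref{bound:e3}) and integrates trivially, while $\epsilon_{4,\tau}$ is a product of $\overline{\epsilon_{3,\tau}} = O(\sqrt{\tau})$ with a term whose phase $[S(x_{m}(u))-u x_{m}(u)]/\tau$ has $u$-derivative $-x_{m}(u)/\tau$. Since $x_{m}(\ut)=0$ would force $\ut = s(0) \in \C$ (excluded), the combined phase derivative remains bounded away from $0$ even after accounting for the slowly varying phases inside $\epsilon_{3,\tau}$ indicated in (\ref{eq:e3mainterm1})--(\ref{eq:e3mainterm2}); integration by parts then promotes the pointwise $O(\sqrt{\tau})$ to an integrated $O(\tau)$.

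The aliasing contributions fall under the same two principles. By (\ref{bound:aliasingerror}), $|E_{a}(u)|^{2} = O(\tau)$ pointwise and so integrates to $O(\tau)$. For the cross term $F_{\tau}\overline{E_{a}}$, I insert the stationary-phase expansion (\ref{eq:Ftau_approx}) for $F_{\tau}$ and the termwise expansion (\ref{eq:aesummation1})--(\ref{eq:aeerrorbound}) for each $F_{\tau}(u-\gamma_{l})$, giving products of pointwise magnitude $O(\sqrt{\tau})$ that are summable over $l$ via the bound (\ref{bound:aliasingerrorlthterm}) together with $\sum 1/l^{2} < \infty$. Each product carries a combined phase whose $u$-derivative is dominated by the fast $-x_{m}(u)/\tau$ coming from $F_{\tau}$ (the endpoint phases in $E_{a}$ contribute $u$-derivatives of at most order $L/\tau$ in magnitude, but with the opposite sign pattern). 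Integration by parts against this phase extracts a further factor of $\tau$, yielding an integrated cross-term contribution of $O(\tau)$ after interchanging sum and integral. Assembling the four bounds gives $\int_{\Nalphat}[P^{DTFT}_{\tau} - P]\,du = O(\tau) = O(1/N)$. The main technical obstacle is the last step: verifying uniformly on $\Nalphat$ that the combined phase derivative in the $F_{\tau}\overline{E_{a}}$ cross term does not vanish, because the fast phase $-x_{m}(u)/\tau$ could in principle be cancelled by an endpoint phase derivative such as $-L/\tau$; this is where the hypothesis $\ut \notin \C$ together with the choice of $\tilde{\alpha}$ from Lemma~\ref{lemma:neighborhoodLemma} really carry the proof, by forcing every branch $x_{m}(u)$ to stay strictly inside $(0,L)$ and thereby separated from the aliasing endpoint values.
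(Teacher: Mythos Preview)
Your decomposition and overall strategy coincide with the paper's: the paper writes $P^{DTFT}_{\tau}-P=\epsilon_{5,\tau}$ with exactly your four families (the $\chi_{m,t,\tau}$ cross terms, $|\epsilon_{3,\tau}|^{2}+\epsilon_{4,\tau}+\overline{\epsilon_{4,\tau}}$, $|E_{a}|^{2}$, and $F_{\tau}\overline{E_{a}}+\mathrm{c.c.}$), and then bounds each integrated piece by $O(\tau)$ via the same integration-by-parts mechanism you describe, formalized as Lemma~\ref{lemma:productnostationarypoints} and Lemma~\ref{lemma:boundsonintegrals}.

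There is one localized error in your justification for $\epsilon_{4,\tau}$. The phases inside $\epsilon_{3,\tau}$ coming from (\ref{eq:e3mainterm1})--(\ref{eq:e3mainterm2}) are \emph{not} slowly varying: they are $T(\kappa;u)/\tau$ for $\kappa\in\{\rho_{1},0,L,\rho_{2}\}$, with $u$-derivative $-\kappa/\tau$, i.e.\ of the same fast order as the main-term phase $-x_{m}(u)/\tau$. Your argument ``$x_{m}(\ut)=0$ would force $\ut=s(0)\in\C$'' therefore only covers the $\kappa=0$ piece; for $\kappa=L$ the two fast derivatives could in principle cancel. The correct argument, which the paper packages as Lemma~\ref{lemma:productnostationarypoints} and which you in fact rediscover in your final paragraph for the aliasing cross term, is to combine the two fast phases into $p_{m}(u)=S(x_{m}(u))-S(\kappa)-u(x_{m}(u)-\kappa)$, observe $p_{m}'(u)=\kappa-x_{m}(u)$, and use that $x_{m}(u)$ stays in a compact subset of $(0,L)$ on $\Nalphat(\ut)$ (since $s(0),s(L)\in\C$), so $\kappa-x_{m}(u)$ is uniformly bounded away from $0$ for each of the four $\kappa$'s, including $\kappa=\rho_{1},\rho_{2}$ via the inequality $|\rho_{1}-x_{m}|>|x_{m}|$, $|\rho_{2}-x_{m}|>|L-x_{m}|$. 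Once you apply that same reasoning uniformly to the $\epsilon_{4,\tau}$ block, your proof is complete and matches the paper's.
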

\begin{proof}
\textbf{case (i)} No stationary points: Plugging the $O(\sqrt{\tau})$ bound of the aliasing error and the Fourier transform in (\ref{eq:relationDFTAndFT}) and taking the magnitude square we get $P^{DTFT}_{\tau}(\ut) = O(\tau)$. As $\ut \notin s([\rho_1, \rho_2])$ and the image $s([\rho_1, \rho_2])$ is compact, there exist a neighborhood $\Nalphat(\ut)$ around $\ut$ such that $\forall u \in \Nalphat(\ut)$, no stationary points exists. The selection of $\tau$ as a function of $N$ is discussed below where we reason that the judicious choice of $\tau$ is $\left(\frac{B L}{\pi}\right) \frac{1}{N}$. By integrating $P^{DTFT}_{\tau}(\ut)$ over $\Nalphat(\ut)$ the result follows.\\\\
\textbf{case (ii)} Existence of stationary points: Considering the magnitude square of (\ref{eq:relationDFTAndFT}) and plugging in (\ref{eq:Ph}) we get
\begin{equation}
\label{eq:DPSrelation}
P^{DTFT}_{\tau}(\ut) = P(\ut) + \epsilon_{5,\tau}(\ut)
\end{equation}
where
\begin{align}
\label{eq:errorct}
\epsilon_{5,\tau}(\ut) = &\frac{1}{L}\sum_{m=1}^{M(\ut)}\sum_{t=1;t\not=m}^{M(\ut)}\chi_{m,t,\tau}(x_m,x_t,\ut) \\
&+\left|\epsilon_{3,\tau}(\ut)\right|^2+\epsilon_{4,\tau}(\ut) + \overline{\epsilon_{4,\tau}(\ut)}\\
\label{eq:aesq}
&+\left| \sum_{l=-\infty, l \not=0}^{\infty}F_{\tau}(\ut-\gamma_l)\right|^2\\
\label{eq:aecross}
& + \overline{F_{\tau}(\ut)} \left(\sum_{l=-\infty, l \not=0}^{\infty}F_{\tau}(\ut-\gamma_l)\right) + F_{\tau}(\ut) \overline{\left(\sum_{l=-\infty, l \not=0}^{\infty}F_{\tau}(\ut-\gamma_l)\right)}.
\end{align}
Based on the form of the cross terms in (\ref{eq:crossxmxt}) it is straightforward to check that $\lim\limits_{\tau \rightarrow 0} \chi_{m,t,\tau}(x_m,x_t,\ut)$ doesn\rq{}t exist. Hence in order to recover the density we must integrate the power spectrum over an arbitrarily small neighborhood $\Nalphat(\ut)$ around $\ut$ to nullify these cross terms. Lemma~\ref{lemma:neighborhoodLemma} endows us with one such neighborhood. Recall that from Lemma~\ref{lemma:neighborhoodLemma}, $s^{(-1)}\left( \Nalphat(\ut)\right) = \bigcup\limits_{m=1}^{M(\ut)}\Nalphat(x_m)$ where $\Nalphat(x_m)$ is the image of $s^{(-1)}\left( \Nalphat(\ut)\right)$ confined around $x_m$. We set $\Nalphat(\ut) = [\ut - \tilde{\alpha}, \ut+\tilde{\alpha}]$ where we select a small enough $\tilde{\alpha}$ (independent of $\tau$) in accordance with Lemma~\ref{lemma:neighborhoodLemma} and also choose the remaining constants $\{c_1,c_2,\cdots,c_{M(\ut)-1}\}$ such that $\Nalphat(x_m) \subset (c_{m-1}, c_{m}), \forall m$. This would enable the definitions given in (\ref{def:G1})-(\ref{def:tildaKm}) concerning these constants to be extended $\forall u \in \Nalphat(\ut)$. Additionally, as $|s(x)| \leq B-\beta, \forall x \in [0,L]$ we further have
$\Nalphat(\ut) \subset (-B,B)$. The following two lemmas capture the $O(1/N)$ approximation of our density estimation method. Their proofs are provided in Appendices~\ref{sec:proofoflemma:productnostationarypoints} and \ref{sec:proofoflemma:boundsonintegrals} respectively.

\begin{lemma}
\label{lemma:productnostationarypoints}
Let the constant $\kappa \in \{\rho_1,0,L,\rho_2\}$ so that $s(\kappa) \not= u,\forall u \in \Nalphat(\ut)$. Let $|s(\kappa)-(u-\gamma)|\geq \xi$ for some constant $\xi > 0$ where $\gamma = \frac{2 \pi \tau l}{\delta} = 2 B C l , l \in \mathbb{Z}$ (including $l=0$). Define
\begin{equation}
\label{def:zeta}
\zeta_{\tau}(u)  \equiv \frac{\exp\left(\frac{iT(\kappa;u-\gamma)}{\tau}\right)}{\left[s(\kappa)-(u-\gamma)\right]^2}
\end{equation}
for $u \in \Nalphat(\ut)$.
Then
\begin{equation*}
Z_{\tau} = \int\limits_{\Nalphat(\ut)} \overline{\zeta_{\tau}(u)} \frac{\exp\left(\frac{i}{\tau}\left[S(x_m(u))-u x_m(u)\right]\right)}{\sqrt{\left|S^{\prime\prime}(x_m(u))\right|}}\, du = O\left(\frac{\tau}{\xi^2}\right), \forall m.
\end{equation*}
\end{lemma}

\begin{lemma}{[}Bound on Integrated Error Lemma{]}
\label{lemma:boundsonintegrals}
The bound on the each of the error terms in $\epsilon_{5,\tau}(u)$ when integrated over an interval $\Nalphat(\ut)$ chosen independent of $\tau$ are as summarized in Table~\ref{table:integral bounds} based on which we could conclude that
\begin{equation*}
\int\limits_{\Nalphat(\ut)} \epsilon_{5,\tau}(u) \, du = O(\tau).
\end{equation*}
\end{lemma}
\begin{center}
\begin{table}[ht!]
\caption{Bound on the integrals \label{table:integral bounds}}
\noindent \centering{}%
\begin{tabular}{|l|l|}
\hline 
Integrated over $\Nalphat(\ut)$  & Bound \tabularnewline
\hline\hline
$\chi_{m,t,\tau}\left(x_m(u),x_t(u),u\right)$ & $O(\tau)$ \tabularnewline
\hline 
$\left|\epsilon_{3,\tau}(u)\right|^2$  & $O(\tau)$ \tabularnewline
\hline 
$\epsilon_{4,\tau}(u)$ & $O\left(\tau\right)$ \tabularnewline
\hline 
$\left| \sum_{l=-\infty, l \not=0}^{\infty}F_{\tau}(u-\gamma_l)\right|^2$ & $O(\tau)$ \tabularnewline
\hline 
$\overline{F_{\tau}(u)} \left(\sum_{l=-\infty, l \not=0}^{\infty}F_{\tau}(u-\gamma_l)\right)$ & $O(\tau)$\tabularnewline
\hline 
\end{tabular}
\end{table}
\par\end{center}

\noindent \textbf{Choice of $\tau$ as a function of $N$:}
In Section~\ref{sec:taubound} we demonstrated that if there are only $N$ finitely many samples of $S$ picked at intervals of $\delta$, $\tau$ cannot be set arbitrarily close to zero and should respect the inequality~(\ref{sec:taubound}). Lemma~\ref{lemma:boundsonintegrals} establishes that the integral of the error $\epsilon_{5,\tau}(u)$ between the true and the estimated density over a small interval $\Nalphat(\ut)$ is bounded by $\tau$ and hence we can expect the error profile to portray a decreasing trend as we tune down $\tau$. Apropos to the aforementioned statements it is logical to conclude that the judicious choice of $\tau$ for a given $N$ equals
\begin{equation*}
\tau = \frac{B \delta}{\pi} = \left(\frac{B L}{\pi}\right) \frac{1}{N}.
\end{equation*}
The inverse relation between $\tau$ and $N$ proves Theorem~\ref{thm:derivativedensity}.
\end{proof}

We also obtain the following corollary as a direct consequence of Theorem~\ref{thm:derivativedensity}.
\begin{corollary}
\label{cor:derivativedensity}
For all $\tilde{u} \in  [-B,B] - \C$ consider the closed interval $\mathcal{N}_{\alpha}(\tilde{u}) = [\tilde{u}-\tilde{\alpha}, \tilde{u}+\tilde{\alpha}]$ of length $2 \tilde{\alpha}$ satisfying Lemma~\ref{lemma:neighborhoodLemma}. Then
\begin{equation}
\label{eq:densityconvergence}
\lim\limits_{\tilde{\alpha}\rightarrow 0} \frac{1}{2 \tilde{\alpha}} \lim\limits_{N \rightarrow \infty} \int\limits_{\Nalphat(\ut)}P^{DTFT}_{\left(\frac{B L}{\pi N}\right)}(u) = P(\ut).
\end{equation}
\end{corollary}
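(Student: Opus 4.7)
The plan is to derive the corollary as a two-step limit argument that straightforwardly unpacks the integrated convergence result proved in Theorem~\ref{thm:derivativedensity}. First I would apply the theorem with $\tau = \frac{BL}{\pi N}$ to write
\begin{equation*}
\int\limits_{\Nalphat(\ut)} P^{DTFT}_{\left(\frac{BL}{\pi N}\right)}(u)\, du = \int\limits_{\Nalphat(\ut)} P(u)\, du + O\!\left(\frac{1}{N}\right),
\end{equation*}
which holds under the hypothesis that $\tilde{\alpha}$ is chosen independently of $N$ in accordance with Lemma~\ref{lemma:neighborhoodLemma}. Since the right-hand error term vanishes as $N\rightarrow \infty$, taking the inner limit yields
\begin{equation*}
\lim_{N\rightarrow\infty}\int\limits_{\Nalphat(\ut)} P^{DTFT}_{\left(\frac{BL}{\pi N}\right)}(u)\, du = \int\limits_{\Nalphat(\ut)} P(u)\, du.
\end{equation*}

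Next I would divide by $2\tilde{\alpha}$ and send $\tilde{\alpha}\rightarrow 0$. The quantity $\frac{1}{2\tilde{\alpha}}\int_{\Nalphat(\ut)} P(u)\, du$ is the interval average of the true density, so its limit is $P(\ut)$ provided $P$ is continuous at $\ut$ (or, more weakly, $\ut$ is a Lebesgue point of $P$). This is where I would appeal carefully to Lemma~\ref{lemma:neighborhoodLemma}: on $\Nalphat(\ut)$ the count $M(u)$ is constant, each branch inverse $x_m(u) = s^{(-1)}(u)\big|_{\Nalpha(x_m)}$ is a well-defined continuous function with $S^{\prime\prime}(x_m(u))\neq 0$ of constant sign, and $S^{\prime\prime}$ is continuous. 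Hence
\begin{equation*}
P(u) = \frac{1}{L}\sum_{m=1}^{M(\ut)} \frac{1}{|S^{\prime\prime}(x_m(u))|}
\end{equation*}
is a finite sum of continuous functions on $\Nalphat(\ut)$, so $P$ is continuous at $\ut$, and the averages converge to $P(\ut)$.

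The only subtlety, which I would state explicitly, is the order of limits: Theorem~\ref{thm:derivativedensity} is applied with $\tilde{\alpha}$ fixed and $N\rightarrow\infty$ first, so the $O(1/N)$ error bound (whose implicit constant may depend on $\tilde{\alpha}$ through the set $\Nalphat(\ut)$ and the constants $\{c_m\}$) vanishes before $\tilde{\alpha}$ is taken to zero. Once the inner limit has been evaluated, the outer limit involves only the deterministic continuous function $P$, so no further uniformity in $\tilde{\alpha}$ is needed. I expect no substantial obstacle beyond making this ordering clear and invoking the Neighborhood Lemma to certify continuity of $P$ at $\ut$; the corollary then follows by concatenating the two displayed limits.
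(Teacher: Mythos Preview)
Your proposal is correct and matches the paper's intent: the paper gives no explicit proof, stating only that the corollary is ``a direct consequence of Theorem~\ref{thm:derivativedensity},'' and your two-step argument (apply the theorem with $\tilde{\alpha}$ fixed to evaluate the inner limit, then use continuity of $P$ at $\ut$ via Lemma~\ref{lemma:neighborhoodLemma} for the outer limit) is precisely the natural unpacking of that remark. Your explicit verification that $P$ is continuous on $\Nalphat(\ut)$ and your discussion of the order of limits supply details the paper omits, but the approach is the same.
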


\section{Experimental justification}
\label{sec:expverification}
Below we experimentally verify the upper bound on the approximation error of our gradient density estimator for some randomly chosen different types of functions $S(x)$ which are polynomials, exponentials, sinusoids, logarithmic and combinations of these. The sampling locations $\{y_n\}_{n=1}^N$ where chosen between the interval $[0,2]$ for different values of interval width $\delta = \frac{2}{N}$. The bound $B$ on the derivative was approximated by setting it to the maximum absolute of the derivative computed via finite differences, i.e,
\begin{equation}
\label{eq:Bapprox}
 B \sim \max\limits_{n=0}^{N-2} \left|\frac{S\left(y_{n+1}\right)- S\left(y_n\right)}{\delta}\right|.
\end{equation}
The true density $P(u)$ was either computed in closed form whenever possible or approximated via standard histograms. We set $\tau$ to its corresponding lower bound $\frac{BL}{\pi N}$. 

To showcase the efficacy of our wave function method, in the left panel of Figure~\ref{fig:diffFuncs} we plot the true density and in the right panel we show the estimated density via our stationary phase method computed at $K=2048$ frequency values using $N=65536$ samples. It is visually clear that the density determined from our wave function method is almost identical to the true density function.  Also notice that the frequency locations where the true density is zero, our estimated density function is also zero. Further, we investigated two variations of our convergence results as follows.

 \begin{figure}
 \centering
 \includegraphics[width=0.49\textwidth]{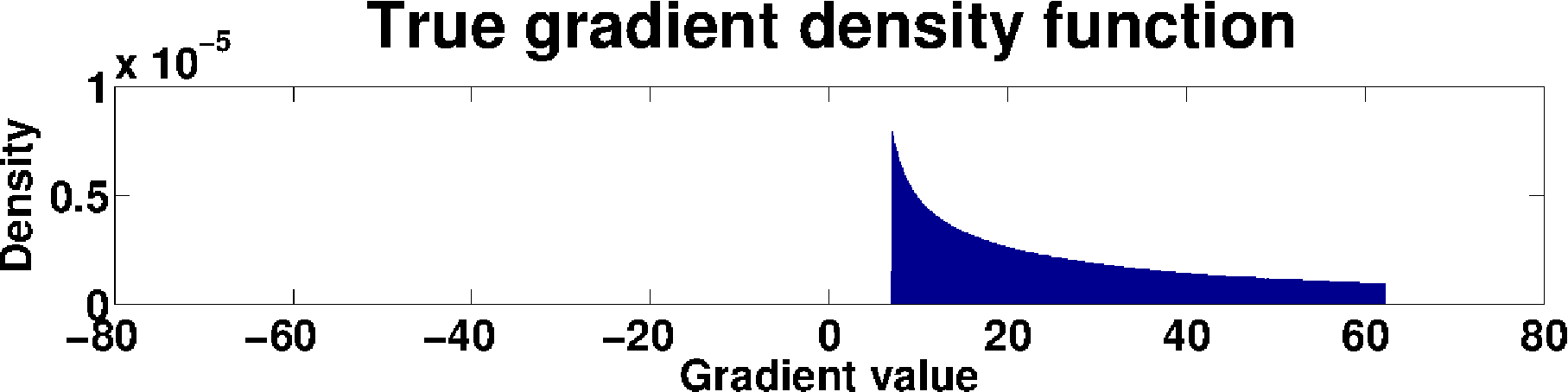} 
  \includegraphics[width=0.49\textwidth]{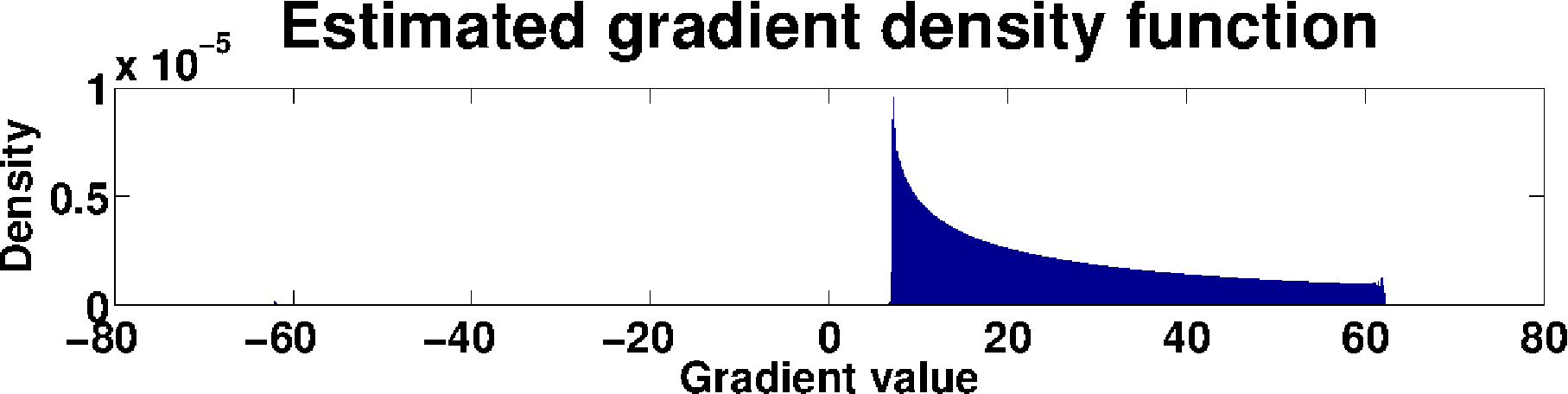}
  \includegraphics[width=0.49\textwidth]{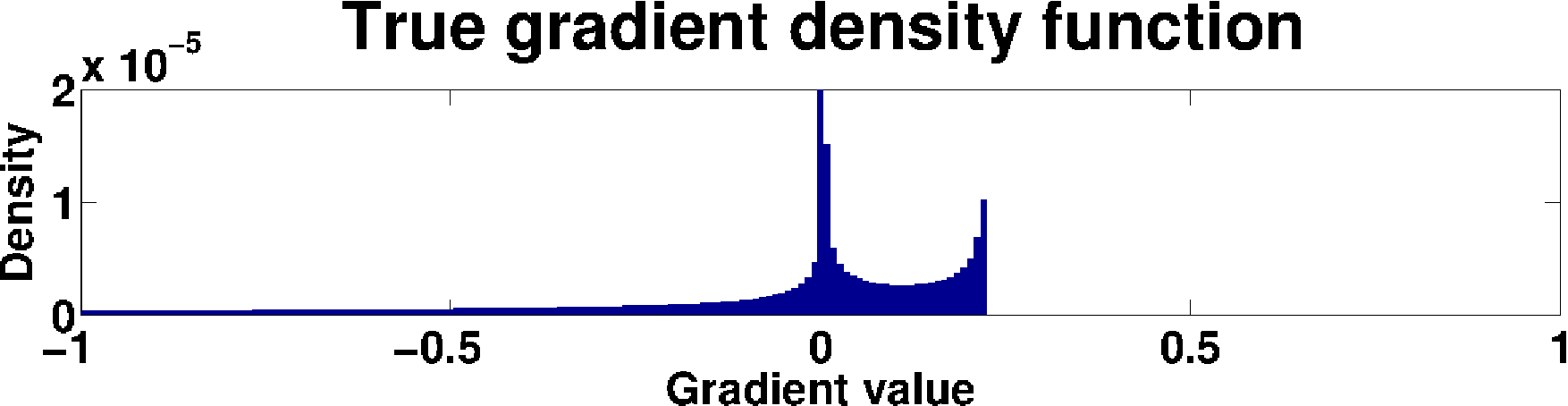} 
  \includegraphics[width=0.49\textwidth]{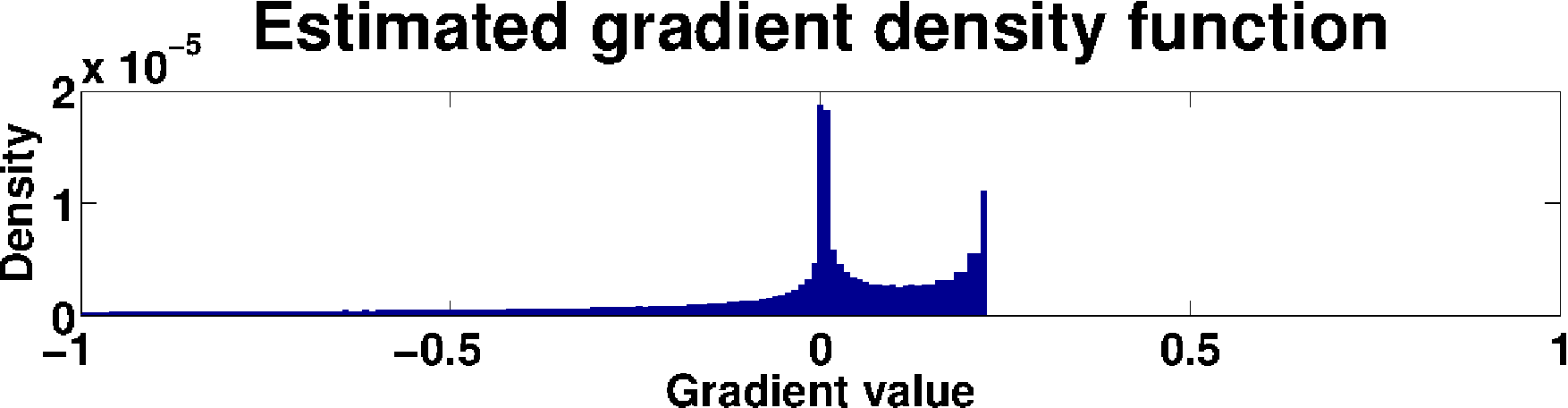}
   \includegraphics[width=0.49\textwidth]{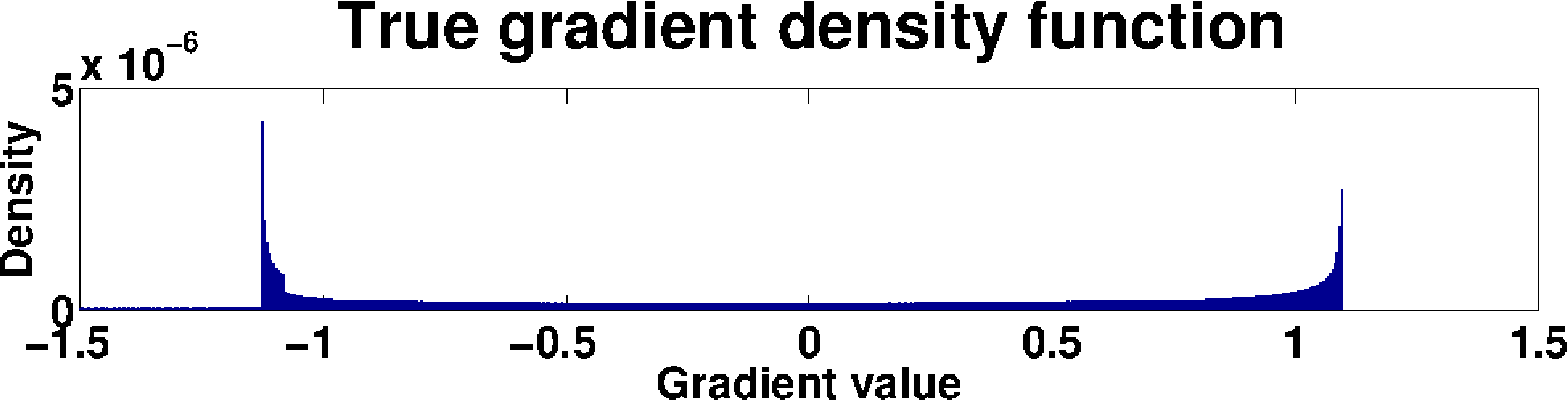} 
  \includegraphics[width=0.49\textwidth]{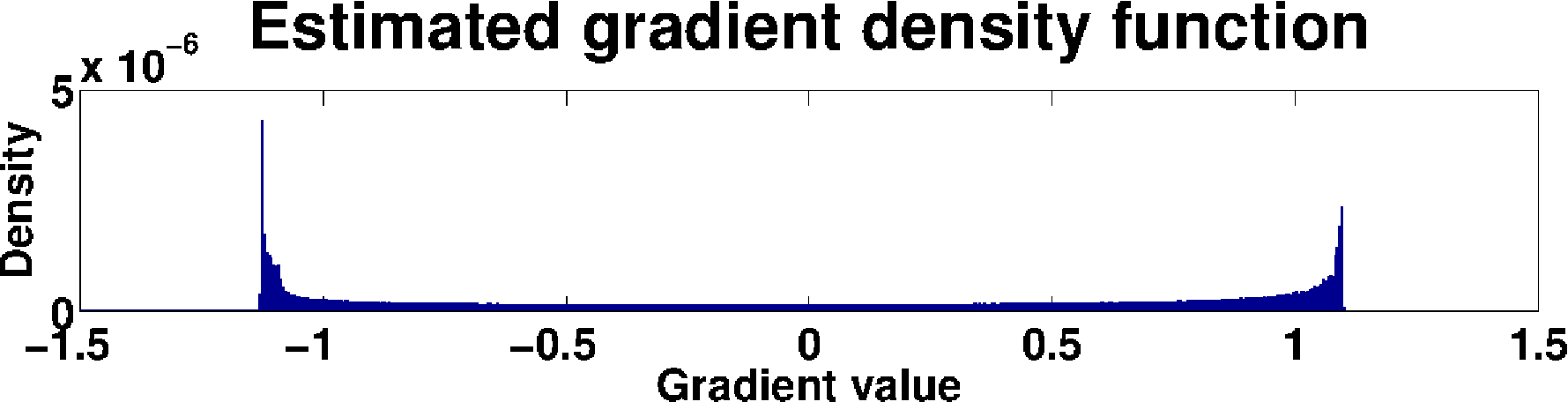}
  \includegraphics[width=0.49\textwidth]{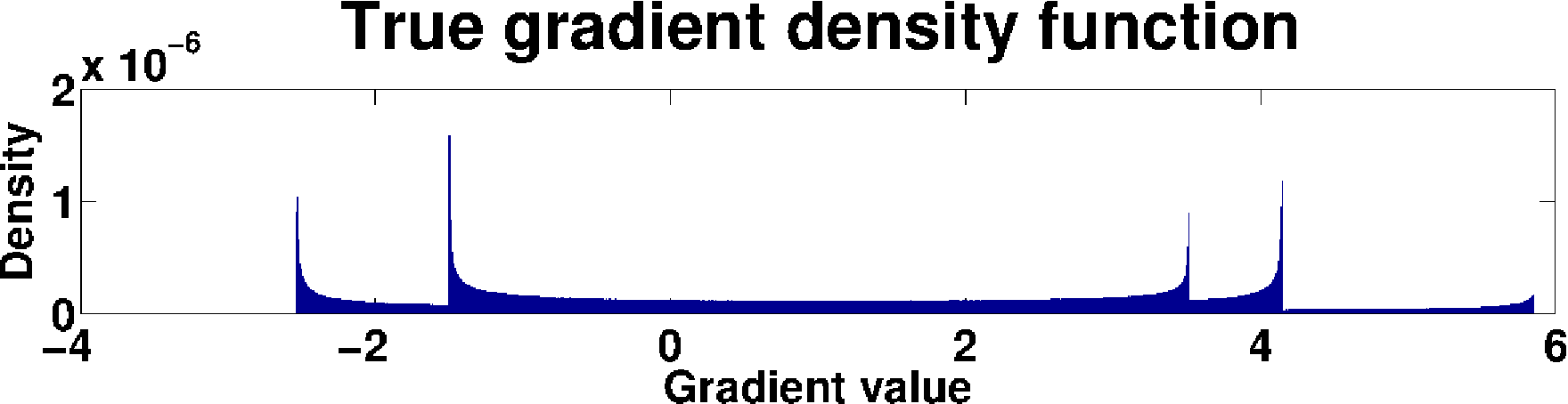} 
  \includegraphics[width=0.49\textwidth]{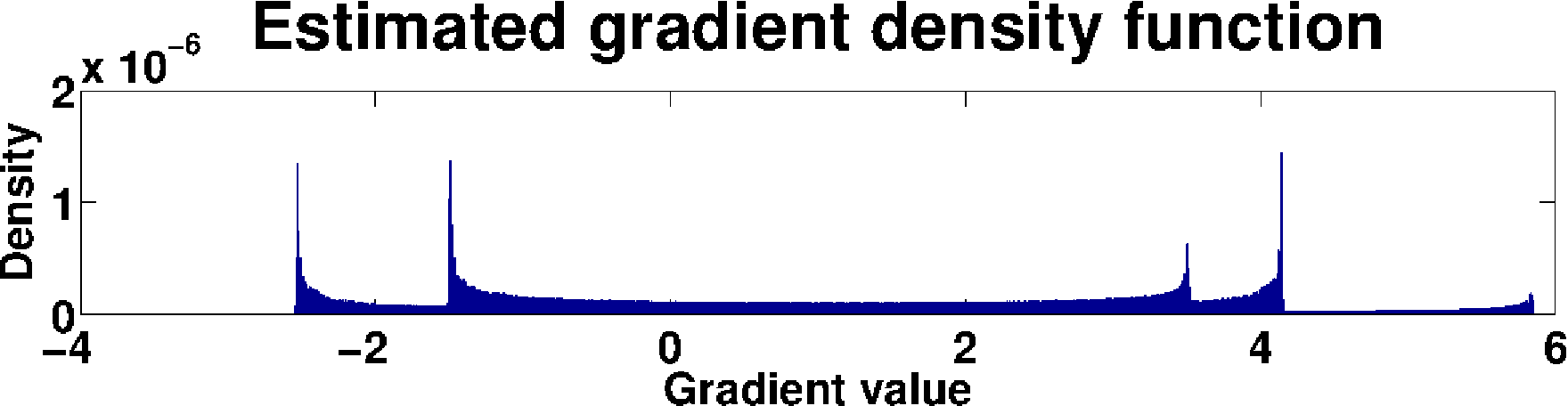}  
  \caption{Comparison between (i) Left: True gradient density function, and (ii) Right: Estimated gradient density function.}
   \label{fig:diffFuncs}
  \end{figure}
\subsection{Case study 1} \emph{Progressively increase the the number of samples $N$ and for each $N$ set $\tau$ to its corresponding lower bound $\frac{BL}{\pi N}$.}

Our Theorem~\ref{thm:derivativedensity} require both the power spectrum and the true density to be integrated over a small neighborhood $\Nalpha$ chosen independent of $N$ to observe convergence. To this end we preselect a set of $K = 255$ fixed frequencies $\{\ut_1, \ut_2,\cdots,\ut_{K}\}$ and consider appropriate non-overlapping neighborhoods $\{\Nalphat(\ut_k)\}_{k=1}^K$ around them in accordance with Lemma~\ref{lemma:neighborhoodLemma}. Note that as we scale up $N$, the \emph{number} of frequency locations $N_k$ within each neighborhood $\Nalphat(\ut_k)$ where the discrete power spectrum is defined also increases and hence we could progressively approximate the integrals in Theorem~\ref{thm:derivativedensity} by its summation involving $P^{DFT}_{\tau}(u)$. The replacement of the integral with sums is akin to the well known Riemann summation approximation though not exactly equivalent as the underlying function $P^{DTFT}_{\tau}(u)$ whose samples we get in the form of $P^{DFT}_{\tau}(u)$ keeps varying with $N$ (and also with $\tau$) as easily seen from the definition of the scaled DTFT in (\ref{def:scaledDFT}) that involves summation over $N$ terms. Notwithstanding this conceptual difference and continuing to term it as Riemann sums we define the error between their respective Riemann summation as
\begin{equation}
\label{eq:DeltatauN}
\Delta_{\tau,N} = \frac{1}{K} \sum_{k=1}^K \frac{2 \pi \tau}{N \delta}\left|\sum_{l=1}^{N_k} \left[P_{\tau}^D\left(u_{l,k}\right) - P\left(u_{l,k}\right)\right] \right|
\end{equation}
where $u_{l,k}$ is the $l^{th}$ frequency location within the interval $\Nalphat(\ut_k)$. The spacing between the frequencies equals $\frac{2 \pi \tau}{N \delta}$. In Figure~\ref{fig:densityconvergence} we visualize these Riemann summations where we observe that as we increase $N$, the interval measure of our density function (plotted right) steadily approaches the interval measure of the true density function (plotted left) corroborating our assertion that the power spectrum can increasingly, accurately serve as the gradient density estimator for large values of $N$. In Figure~\ref{fig:convergencerate} we plot $\Delta_{\tau,N}$ for different values of $\tau$ and find it to be \emph{linear} ascertaining our Theorem~\ref{thm:derivativedensity}.

\subsection{Case study 2} \emph{Fix $N = N_0$ and progressively decrease $\tau$ from some high value to its appropriate choice $\tau_0 = \delta_0$.}

The purpose of this case study is to verify that the lower bound on $\tau = \frac{B L}{\pi N}$ is indeed its optimum value. We fixed $N=N_0=65536$ and computed the average summation error $\Delta_{\tau,N_0}$ according to (\ref{eq:DeltatauN}) for varying values of $\tau$, averaged over the preselected $K=255$ fixed number of frequencies. The plot in Figure~\ref{fig:optimumtau} displays the behavior of $\Delta_{\tau,N_0}$ with $\tau$. Note that the number of samples $N_k$ within each neighborhood $\Nalphat(\ut_k)$ does not change as $N$ is held constant. However, the values of discrete power spectrum $P^{DFT}_{\tau}(u_{l,k})$ at the frequencies $u_{l,k}$ varies with changing $\tau$.
The following inferences can be deduced from the profile of the graph in Figure~\ref{fig:optimumtau} namely:
\begin{enumerate}
\item The error steadily decreases with $\tau$ as we approach its lower bound.
\item The rate of decline is almost \emph{linear} in $\tau$ substantiating the concluding remarks of Lemma~\ref{lemma:boundsonintegrals}.
\end{enumerate}
 \begin{figure}
 \centering
 \includegraphics[width=0.49\textwidth]{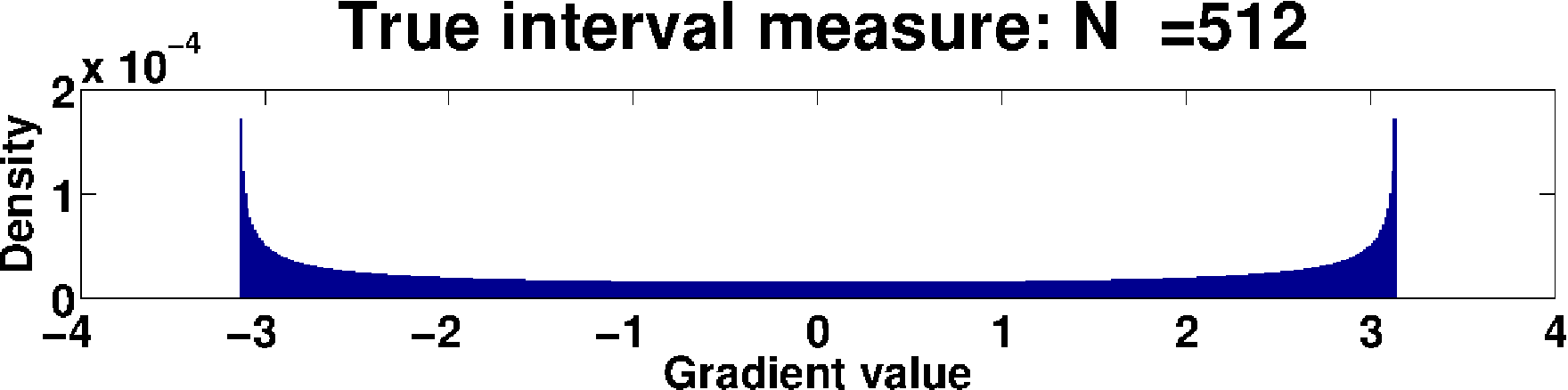} 
  \includegraphics[width=0.49\textwidth]{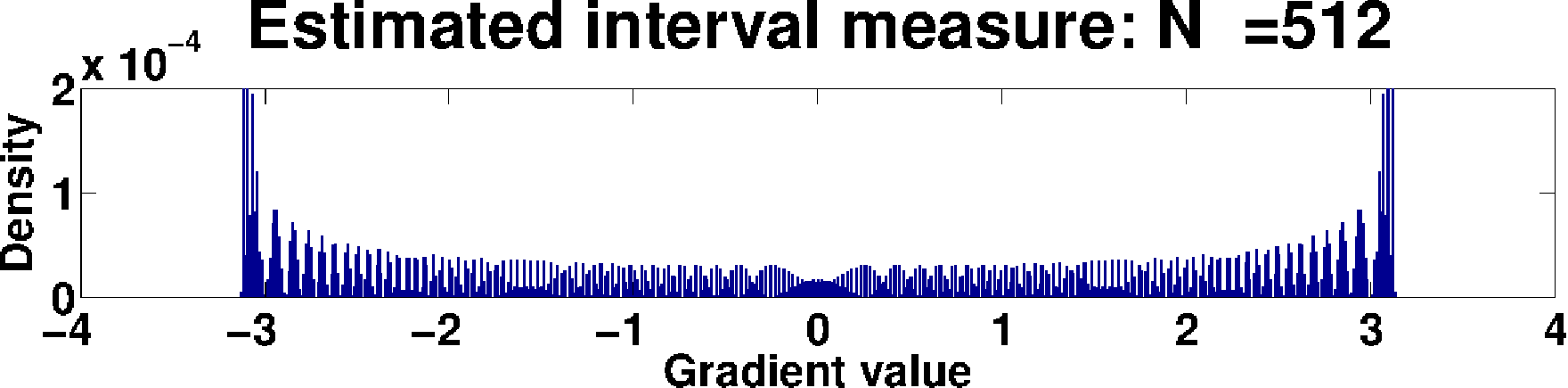} 
 \includegraphics[width=0.49\textwidth]{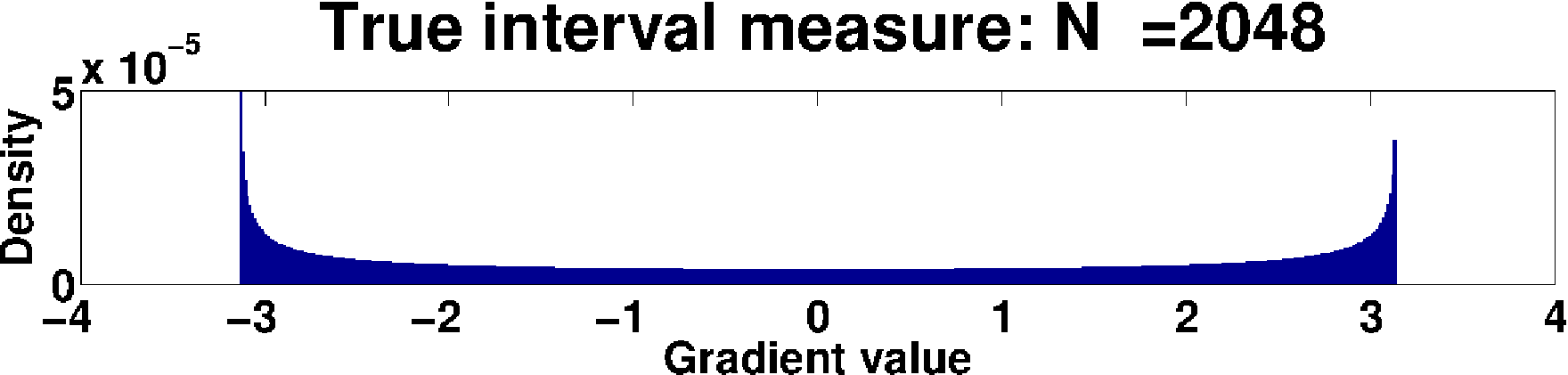}
 \includegraphics[width=0.49\textwidth]{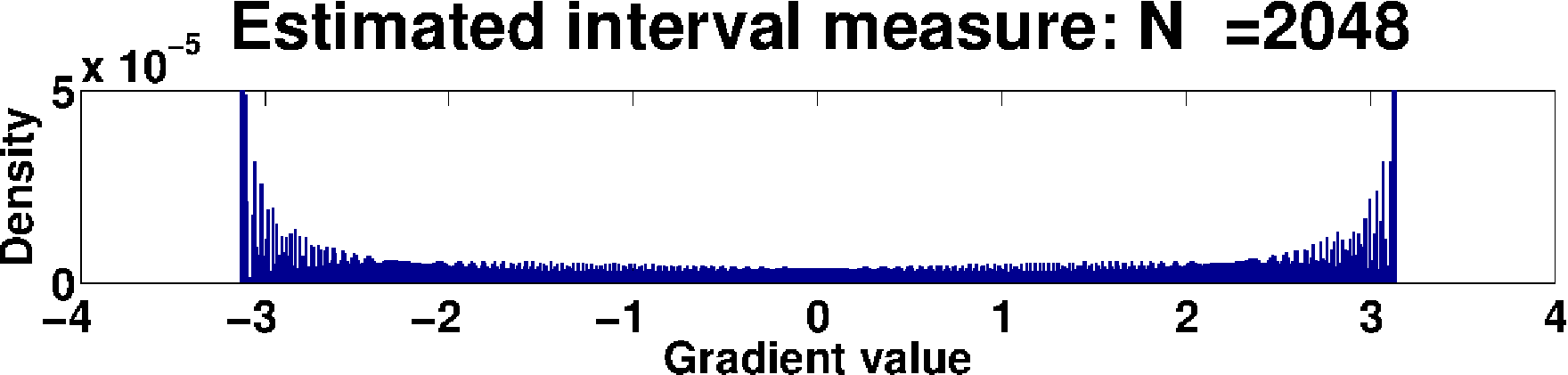} 
  \includegraphics[width=0.49\textwidth]{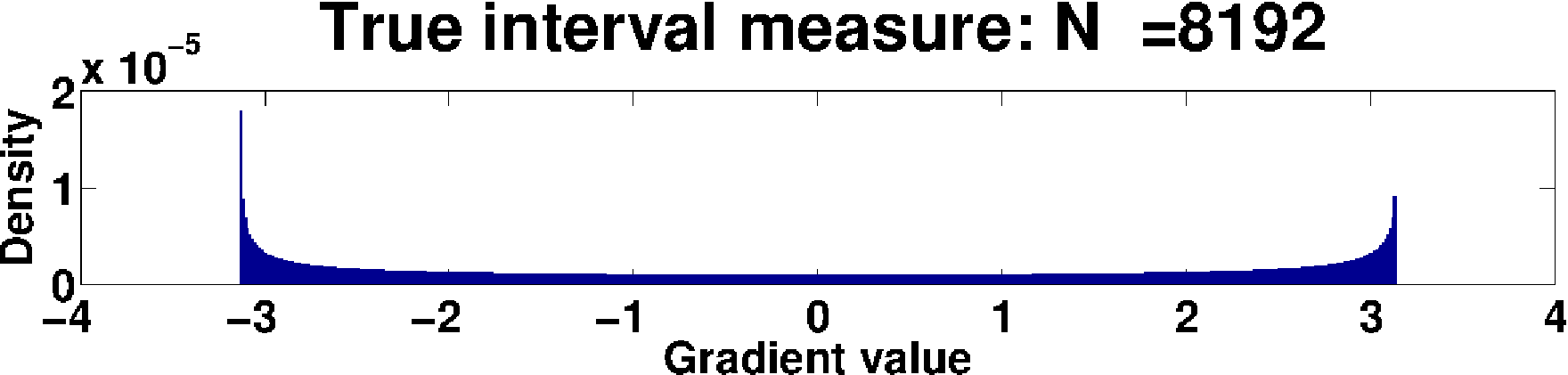}
  \includegraphics[width=0.49\textwidth]{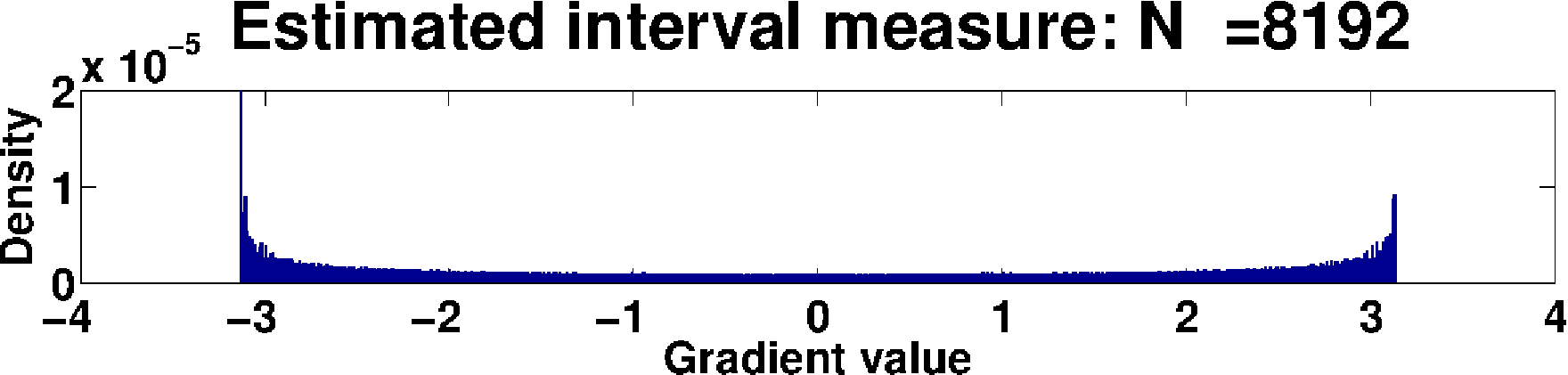} 
  \includegraphics[width=0.49\textwidth]{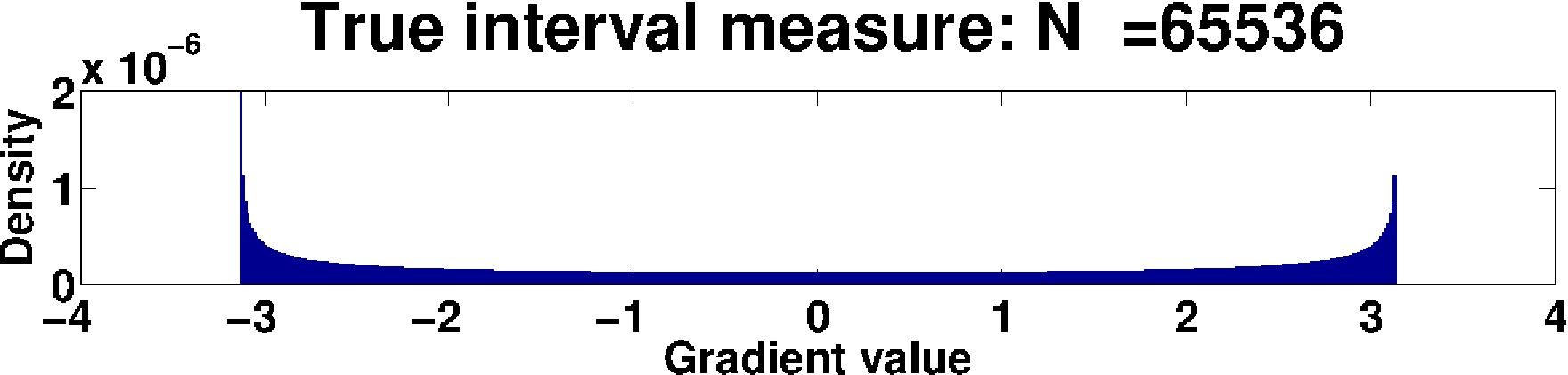}
  \includegraphics[width=0.49\textwidth]{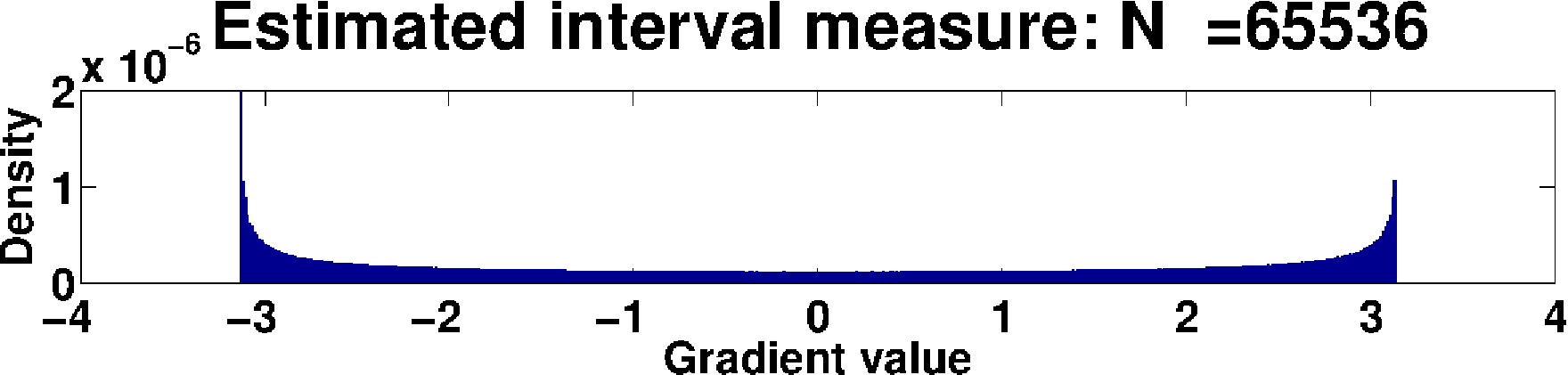} 
  \caption{Convergence with increasing $N$. (i) Left: True interval measure, and (ii) Right: Estimated interval measure.}
   \label{fig:densityconvergence}
  \end{figure}
  
  \begin{figure}
 \centering
 \begin{minipage}{0.49\textwidth}
 	\centering
	\includegraphics[width=1\textwidth]{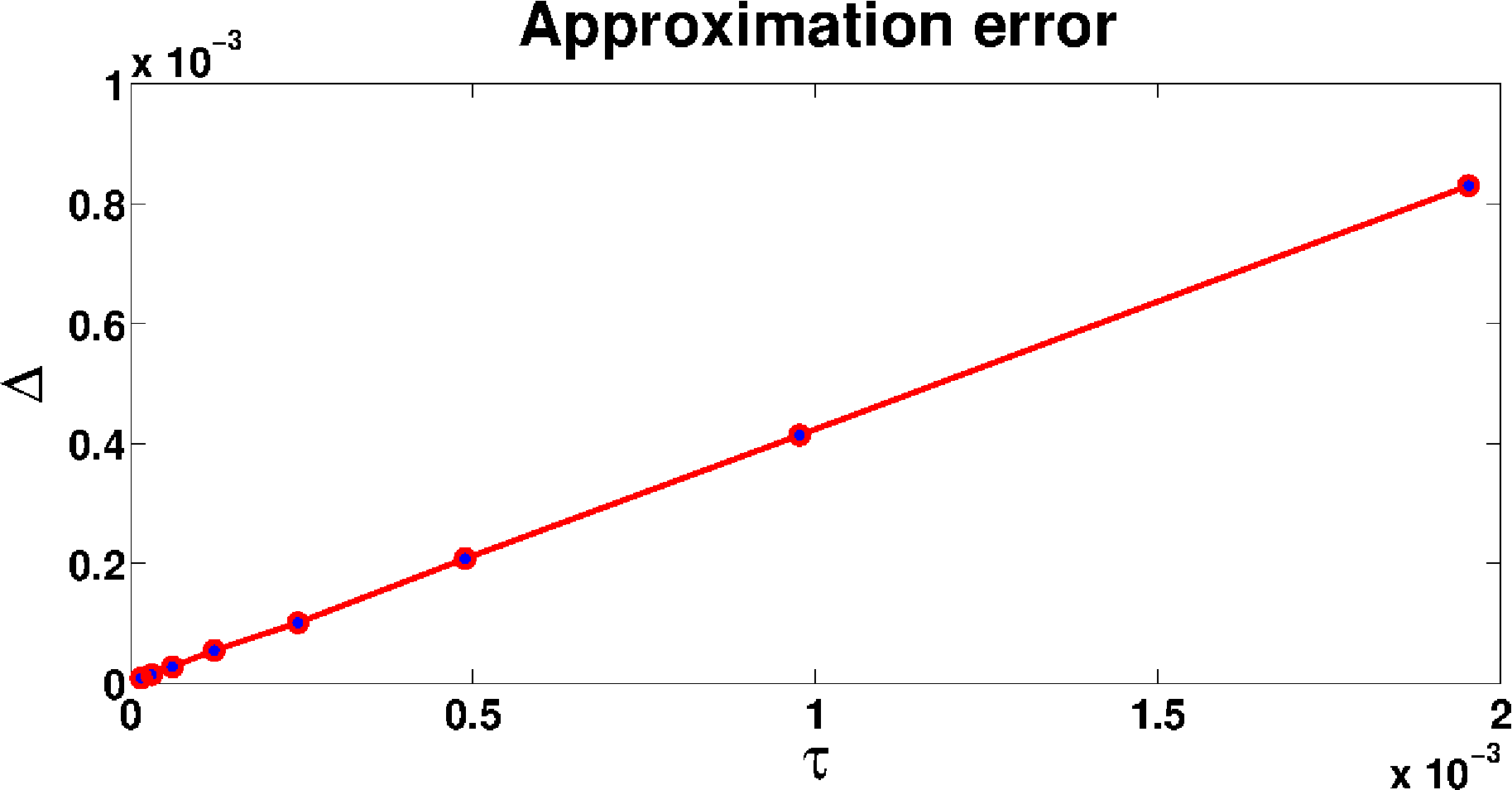}
	\caption{Approximation error with decreasing (increasing) $\tau$ ($N$).}
	\label{fig:convergencerate}
  \end{minipage}
  \hfill
  \begin{minipage}{0.49\textwidth}
 	\centering
	\includegraphics[width=1\textwidth]{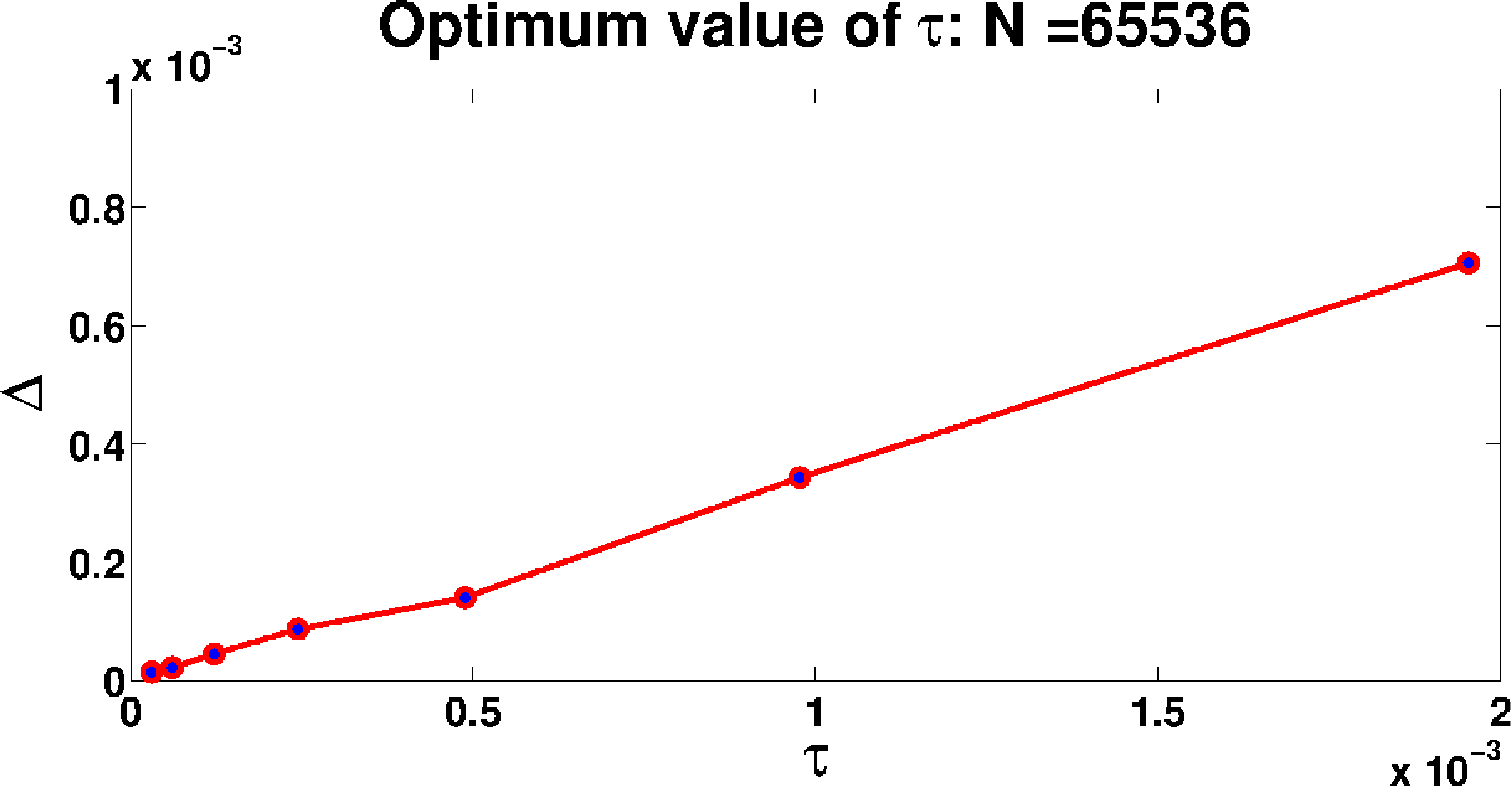}
	\caption{Approximation error with decreasing $\tau$ for a fixed $N=N_0$.}
	\label{fig:optimumtau}
  \end{minipage}
  \end{figure}
  
\section{Discussion}
The integrals
\begin{equation*}
I_{\tau}(\ut)= \int\limits_{\Nalphat(\ut)}P^{DTFT}_\tau (u) \, du, \hspace{10pt}I(\ut)=\int\limits_{\Nalphat(\ut)} P(u) \, du
\end{equation*}
 represent the interval measures of the density functions $P^{DTFT}_{\tau}$ and
$P$ respectively over an interval $\Nalphat(\ut)$ where the interval length can be made arbitrarily smaller but independent of $N$. Theorem~\ref{thm:derivativedensity} states that given the $N$ samples of $\phi(x)=\frac{1}{\sqrt{L}}\exp\left(\frac{iS(x)}{\tau}\right)$ and when $\tau$ is set to its lower bound $\frac{B L}{\pi N}$, both interval measures are almost equal with the difference between them decreasing at the fast rate of $O(1/N)$. Recall that the scaled discrete power spectrum $P^{D}_{\tau}(u_k)$ computed at the $N$ scaled frequencies $\{u_k\}_{k=0}^{N-1}$ spaced at increasingly smaller intervals of $\frac{2 \pi \tau}{N \delta} = \frac{2 B}{N}$ are the uniform samples of the discrete time power spectrum $P^{DTFT}_\tau (u)$. In Section~\ref{sec:expverification} we showed through simulations that the error between the Riemann sum approximation of interval measure $I_{\tau}(\ut)$ computed using $P^{D}_{\tau}(u_k)$ and those of $I(\ut)$ is bounded above by $O(1/N)$ and hence the discrete power spectrum can serve as the \emph{density estimator} for the derivative of $S$ at large values of $N$. Extension of this result to higher dimensions is a fruitful topic for future research. 

\appendix
\section{Proof of Lemmas}
\label{sec:proofoflemmas}
Below we provide the proofs for all the lemmas stated in this article.
\subsection{Proof of Finiteness Lemma}
\label{sec:proofoffinitenesslemma}
\begin{proof}
 We prove the result by contradiction. Observe that $\Au$
is a subset of the compact set $\Omega$. If $\Au$ is
not finite, then by Theorem~2.37 in \cite{Rudin76}, $\Au$
has a limit point $x_0 \in\Omega$. Consider a sequence $\{x_n\}_{n=1}^{\infty}$,
with each $x_{n}\in\Au$, converging to $x_{0}$. Since
$s(x_{n})=u,\forall n$, from the continuity of $s$
we get $s(x_{0})=u$ and hence $x_{0}\in\mathcal{A}_{u}$. Since $u\in\mathcal{C}$, $x_0 \notin \{0,L\}$. Additionally,
\begin{equation*}
\lim_{n\rightarrow\infty}\frac{s(x_0)-s(x_n)}{x_0-x_n}=0=s^{\prime}(x_0)=S^{\prime\prime}(x_{0})
\end{equation*}
 implying that $x_0 \in \mathcal{B}$ and $u \in \C$ resulting in a contradiction.
\end{proof}

\subsection{Proof of Neighborhood Lemma}
\label{sec:proofofneighborhoodlemma}
\begin{proof}
 Observe that $\mathcal{B}$ is closed---and being a subset of $\Omega$ is also compact---because if $x_{0}$ is a
limit point of $\mathcal{B}$, from the continuity of $S^{\prime\prime}$
we have $S^{\prime\prime}(x_{0})=0$ and hence $x_0 \in\mathcal{B}$.
Since $s$ is continuous, the set $\C$ is also compact and hence $\mathbb{R}-\mathcal{C}$ is open. Then
for $u\notin\mathcal{C}$, there exists an open neighborhood $\mathcal{N}_{r}(u)$
for some $r>0$ around $u$ such that $\mathcal{N}_{r}(u)\cap\mathcal{C}=\emptyset$.
By defining $\alpha=\frac{r}{2}$ ,we get the required closed neighborhood $\Nalpha$ containing $u$. 

Since $S^{\prime\prime}(x)$ is continuous and does not vanish $\forall x \in \Au$, all the other points of this lemma follow directly from the inverse function theorem. As $M(u)$ is finite by Lemma~\ref{lemma:finitenessLemma}, the neighborhood $\Nalpha$ can be chosen independently of $x \in \Au$ so that the points 1 and 3 are satisfied $\forall x \in \Au$. 
\end{proof}

\subsection {Proof of Density Lemma}
\label{sec:proofofdenistylemma}
\begin{proof}
 Since the random variable $X$ is assumed to have a uniform distribution
on $\Omega$ its density is given by $f_{X}(x)=\frac{1}{L}$ for
every $x\in\Omega$. Recall that the random variable $Y$ is obtained
via a random variable transformation from $X$ using the function
$s$. Hence, its density function exists on $\mathbb{R}-\mathcal{C}$---
where we have banished the image (under $s$) of the measure
zero set of points where $S^{\prime\prime}$ vanishes---and is given
by (\ref{eq:graddensity}). The reader may refer to \cite{Billingsley95}
for a detailed explanation. 
\end{proof}
\newpage
\subsection {Proof of Lemma~\ref{lemma:scaledPS}}
\label{sec:proofofscaledPSlemma}
\begin{proof}
By Parseval\rq{}s theorem we have
\begin{equation*}
 \frac{1}{N \delta} \sum_{k=0}^{N-1} \left|F^{D}(w_k)\right|^2 = \delta \sum_{n=0}^{N-1} \left|\phi_{\tau}^{D}(y_n)\right|^2 = \frac{N \delta}{L} = 1.
\end{equation*}
Noting that 
\begin{equation}
\label{eq:DFTscaledDFTrelation}
\sqrt{2 \pi \tau} F^D_{\tau}(u_k) = F^{D}(w_k), 
\end{equation}
the result follows immediately. 
\end{proof}

\subsection{Proof of No-stationary-points Lemma}
\label{sec:proofofnostationarypointslemma}
\begin{proof}
As $T^{\prime}(x;u)\not=0$, $T(x;u)$ is strictly monotonic. Integrating by parts we get
\begin{align}
W_{\tau}(u) =  &-i \tau \left[\frac{\Hdelta(x)\exp\left(\frac{i T(x;u)}{\tau}\right)}{s(x)-u}\right]_{b_1}^{b_2}\nonumber \\
\label{eq:Ibypartsintegral}
+&i \tau \int_{b_1}^{b_2}\exp\left(\frac{i T(x;u)}{\tau}\right)\left[q_1(x;u) - q_2(x;u)\right]\, dx
\end{align}
where
\begin{equation*}
q_1(x;u)=\frac{\Hdelta^{\prime}(x)}{s(x)-u} \hspace{10pt} \mbox{and} \hspace{10pt}
q_2(x;u)=\frac{\Hdelta(x)S^{\prime \prime}(x)}{\left[s(x)-u\right]^2}.
\end{equation*}
We split the integral in the right side of (\ref{eq:Ibypartsintegral}) into three parts by dividing at $x=0$ and $x=L$ where $\Hdelta^{\prime}$ is discontinuous. As $\Hdelta^{\prime}(x) = \frac{2}{\delta}$ and $s(x) = s(0)$ between $[b_1,0]$, $q_1(x;u) = \frac{2}{\delta (s(0)-u)}$ and $q_2(x,u) = 0$ as $S^{\prime \prime}(x) = 0$. Recalling that $\frac{\tau}{\delta} = \frac{C B}{\pi}$ we get
\begin{align}
\label{eq:Iintegral1complete}
\int_{b_1}^{0}\exp\left(\frac{i T(x;u)}{\tau}\right)q_1(x;u) dx = & \frac{-i 2CB}{\pi \left[s(0)-u\right]^2}\left[\exp\left(\frac{i T(0;u)}{\tau}\right) - \exp\left(\frac{i T(b_1;u)}{\tau}\right)\right]
\end{align}
On the portion $[L, b_2]$ where $\Hdelta^{\prime}(x) = \frac{-2}{\delta}$, $s(x) = s(L)$, and $q_2(x,u) = 0$ we have
\begin{align}
\label{eq:Iintegral3complete}
\int_{L}^{b_2}\exp\left(\frac{i T(x;u)}{\tau}\right)q_1(x;u)\, dx = \frac{i 2CB}{\pi  \left[s(L)-u\right]^2}\left[\exp\left(\frac{i T(b_2;u)}{\tau}\right) - \exp\left(\frac{i T(L;u)}{\tau}\right)\right]
\end{align}
We are left with the interval $[0, L]$ where $\Hdelta(x)$ being identically equal to $1$, $\Hdelta^{\prime}=0$ and $q_1(x;u)$ vanishes. Via integration by parts on the integral involving $q_2(x;u)$ we find
\begin{align}
\int_{0}^{L}\exp\left(\frac{i T(x;u)}{\tau}\right)q_2(u,x)\, dx =& -i \tau\left[\frac{S^{\prime \prime}(x)}{\left[s(x)-u\right]^3}\exp\left(\frac{i T(x;u)}{\tau}\right)\right]_{0}^L \nonumber \\
& +i\tau\int\limits_{0}^L \exp\left(\frac{i T(x;u)}{\tau}\right) \left[ \frac{S^{\prime \prime \prime}(x)}{\left[s(x)-u\right]^3} - \frac{3 \left[S^{\prime \prime}(x)\right]^2}{\left[s(x)-u\right]^4}\right] \, dx \nonumber \\
\label{eq:Iintegral2complete}
& = O\left(\frac{\tau}{\xi^3}\right)
\end{align}
where we have used the premise that $|s(x)-u| \geq \xi, \forall x \in [b_1,b_2]$.
Using the results (\ref{eq:Iintegral1complete}), (\ref{eq:Iintegral2complete}) and (\ref{eq:Iintegral3complete}) in (\ref{eq:Ibypartsintegral}) we get
\begin{align}
\label{eq:Ifinal1}
W_{\tau}(u) = & -i \tau \sum_{r=1}^2  (-1)^r \exp\left(\frac{i T(b_r;u)}{\tau}\right)\frac{\Hdelta(b_r)}{s(b_r)-u} \\
\label{eq:Ifinal2}
&+\frac{2CB\tau}{\pi \left[s(0)-u\right]^2}\left[\exp\left(\frac{i T(0;u)}{\tau}\right) - \exp\left(\frac{i T(b_1;u)}{\tau}\right)\right] \\
\label{eq:Ifinal3}
&+\frac{2CB\tau}{\pi  \left[s(L)-u\right]^2}\left[\exp\left(\frac{i T(L;u)}{\tau}\right) - \exp\left(\frac{i T(b_2;u)}{\tau}\right)\right] \\
\label{eq:Ifinal4}
&+O\left(\frac{\tau^2}{\xi^3}\right)
 \end{align}
which can be succinctly represented as $W_{\tau}(u) = O(\tau)$.
\end{proof}

\subsection{Proof of Lemma~\ref{lemma:productnostationarypoints}}
\label{sec:proofoflemma:productnostationarypoints}
\begin{proof}
Let $ p_{m}(u) = S(x_m(u))-S(\kappa)-u(x_m(u)-\kappa)-\gamma \kappa$. As $s(x_m(u)) = u$ we get $p_{m}^{\prime}(u) = \kappa - x_m(u) \not=0$ indicating that there are no stationary points. 
Defining
\begin{equation}
\label{eq:qm}
q_m(u) = \frac{1}{\left[s(\kappa)-(u-\gamma)\right]^2 \sqrt{\left|S^{\prime\prime}(x_m(u))\right|} (\kappa-x_m(u))}
\end{equation}
and integrating by parts we get
\begin{align*}
Z_{\tau} = &-i \tau \left[\exp\left(\frac{ip_m(u)}{\tau}\right) q_m(u) \right]_{\ut}^{\ut+\alpha_k} \\
&+i \tau \int\limits_{\Nalphat(\ut)}\exp\left(\frac{ip_m(u)}{\tau}\right) q_m^{\prime}(u) \, du.
\end{align*}
Knowing that $\frac{\,dx_m(u)}{\,du} = \frac{1}{S^{\prime\prime}(x_m(u))}$, $q_m^{\prime}(u)$ can be evaluated to be
\begin{align}
q_m^{\prime}(u) =& \frac{2}{\left[s(\kappa)-(u-\gamma)\right]^3 \left|S^{\prime\prime}(x_m(u))\right| [\kappa-x_m(u)]^2}  \nonumber \\
&+\frac{1}{\left[s(\kappa)-(u-\gamma)\right]^4 \left|S^{\prime\prime}(x_m(u))\right| (S^{\prime\prime}(x_m(u))) [\kappa-x_m(u)]^2} \nonumber \\
\label{eq:qmprime}
&-\frac{S^{\prime\prime \prime}(x_m(u))  \left|S^{\prime\prime}(x_m(u))\right|}{2\left[s(\kappa)-(u-\gamma)\right]^4 \left[\left|S^{\prime\prime}(x_m(u))\right|\right]^{\frac{3}{2}} (S^{\prime\prime}(x_m(u))) [\kappa-x_m(u)]^2}
\end{align}
We would like to emphasize the following inequality
\begin{equation}
\label{eq:kappainequality}
\kappa = \left\{ \begin{array}{ll}
              \rho_1; &|\kappa-x_m(u)| > |x_m(u)|; \\
              \rho_2; &|\kappa-x_m(u)| > |L-x_m(u)|.
          \end{array} \right.
\end{equation}
Furthermore, recall that $s(\rho_1) = s(0)$ and $s(\rho_2) = s(L)$ by construction and $\gamma = 2BCl, l \in \mathbb{Z}$ does not depend on $\delta$. Hence both $q_m(u)$ and $q_m^{\prime}(u)$ in (\ref{eq:qm}) and (\ref{eq:qmprime}) respectively can be individually bounded independent of $\delta$ (and also of $\tau$).
The result then follows.
\end{proof}

\subsection{Proof of Bound-on-Integrated-Error Lemma}
\label{sec:proofoflemma:boundsonintegrals}
\begin{proof}
Note that $\Nalphat(\ut)$ being a closed interval includes all the limit points. As $\{s(0),s(L)\} \in \C$ we have $\{s(0),s(L)\}\bigcap\Nalphat(\ut) = \emptyset$ by selection of $\Nalphat(\ut)$ as per Lemma~\ref{lemma:nostationarypoints}. Hence we could find $\xi_1,\xi_2 >0$ such that $|s(0)-u|\geq \xi_1$ and $|s(L)-u|\geq \xi_2, \forall u \in \Nalphat(\ut)$. As these distances are bounded away for zero, all the bounds obtained above for $u=\ut$ can be extended $\forall u \in \Nalphat(\ut)$.

Expressing the spatial locations $x_m$ as a function of $u$ using the inverse function $x_m(u) = s^{(-1)}(u)$, consider the phase of cross term defined in (\ref{eq:crossxmxt}), namely
\begin{equation*}
p_{m,t}(u) = S(x_m(u))-S(x_t(u))-u(x_m(u)-x_t(u)) +\theta_{m,t}(x_m(u),x_t(u))
\end{equation*}
where $x_m(u) \in \Nalpha(x_m)$, $x_t(u) \in \Nalpha(x_t)$ and $\Nalpha(x_m) \bigcap \Nalpha(x_t)  = \emptyset$ as $t\not=m$. Recall that  $\theta_{m,t}(x_m(u),x_t(u))$ depends on the sign of $S^{\prime \prime}(x(u))$ around $\Nalpha(x_m)$ and $\Nalpha(x_t)$. 
The constancy of the sign $S^{\prime \prime}(x_m(u))$ in $\Nalpha(x_m), \forall m$ by Lemma~\ref{lemma:neighborhoodLemma} removes the variability of $\theta_{m,t}(x_m(u),x_t(u))$ around the same region. Checking for the stationary condition while bearing in mind that $s(x_m(u)) = u$ we see that
\begin{equation*}
\left(p_{m,t}\right)^{\prime}(u) =  x_t(u) - x_m(u) \not=0,
\end{equation*}
signaling the absence of any stationary points. Integrating by parts once and proceeding along the proof of Lemma~\ref{lemma:productnostationarypoints} given in Appendix~\ref{sec:proofoflemma:productnostationarypoints} we get
\begin{equation}
\label{bound:integralct}
\int\limits_{\Nalphat(\ut)} \chi_{m,t,\tau}\left(x_m(u),x_t(u),u\right) \, du = O(\tau).
\end{equation}

Apropos to the bounds in (\ref{bound:aliasingerror}) and (\ref{bound:e3}) respectively, the magnitude square of both the aliasing error and $\epsilon_{3,\tau}(u)$ is $O(\tau), \forall u \in \Nalphat(\ut)$. The latter is also guaranteed by our single choice of $\lambda$ and $\epsilon, \forall u \in \Nalphat(\ut)$ as elucidated in Appendix~\ref{sec:e2bound} while deriving the bound for $\epsilon_{2,\tau}(\ut)$. By extending these bounds to the integral of these terms over $\Nalphat(\ut)$ we could reason that
\begin{align}
\label{bound:integrale3sq}
\int\limits_{\Nalphat(\ut)} \left|\epsilon_{3,\tau}(u)\right|^2 \, du &= O(\tau), \hspace{10pt} \mbox{and} \\
\label{bound:integralaesq}
\int\limits_{\Nalphat(\ut)}  \left| \sum_{l=-\infty, l \not=0}^{\infty}F_{\tau}(u-\gamma_l)\right|^2\, du &= O(\tau). \\
\end{align}

We leverage Lemma~\ref{lemma:productnostationarypoints} to bound the integral of $\epsilon_{4,\tau}(u)$ over $\Nalphat(\ut)$. 
Firstly, observe that the expressions in (\ref{eq:e3mainterm1}) and (\ref{eq:e3mainterm2}) are akin to the definition of $\zeta_{\tau}(u)$ in Lemma~\ref{lemma:productnostationarypoints}. Secondly, the remaining error term in (\ref{eq:e3error}) is $O(\tau),  \forall u \in \Nalphat(\ut)$. Furthermore, as the sign of $S^{\prime \prime}(x_m(u))$ is constant in $ \Nalphat(\ut)$, the $\pm \frac{\pi}{4}$ factor in the phase does not vary its sign. Applying Lemma~\ref{lemma:productnostationarypoints} we find
\begin{equation}
\label{bound:integrale4}
\int\limits_{\Nalphat(\ut)} \epsilon_{4,\tau}(u) \, du = O(\tau).
\end{equation}
We are left with computing
\begin{equation*}
\int\limits_{\Nalphat(\ut)}  F_{\tau}(u) \overline{\left(\sum_{l=-\infty, l \not=0}^{\infty} F_{\tau}(u-\gamma_l)\right)} \, du
\end{equation*}
and the integral of its conjugate.  Pursuant to the Lebesgue dominated convergence theorem we can switch the infinite summation and the integral allowing us to focus independently on $F_{\tau}(u-\gamma_l)$. Firstly, as $F_{\tau}(u)$ is a bounded function of $u$, the term in (\ref{eq:aeerrorbound}) when multiplied with $F_{\tau}(u)$ and integrated over $\Nalphat(\ut)$ produces a factor that is $O\left(\frac{\tau \sqrt{\tau}}{[B(|l|-1) +\beta]^3} \right)$. Secondly, recall that $ F_{\tau}(u)$ in (\ref{eq:Ftau_approx}) is composed of two terms where the error term $\epsilon_{3,\tau}(u) = O(\sqrt{\tau})$. The terms in  (\ref{eq:aesummation1}) and (\ref{eq:aesummation2}) being $O\left(\frac{\sqrt{\tau}}{\left[B(|l|-1) +\beta \right]^2}\right), \forall u \in \Nalphat(\ut)$ when multiplied with $\overline{\epsilon_{3,\tau}(u)}$ and integrated results in an expression that is $O\left(\frac{\tau}{\left[B(|l|-1) +\beta \right]^2}\right)$. To bound the integration of the product of first (main) term on the right of (\ref{eq:Ftau_approx}) with the expressions in  (\ref{eq:aesummation1}) and (\ref{eq:aesummation2}), we employ Lemma~\ref{lemma:productnostationarypoints} and find it to be $O\left(\frac{\tau \sqrt{\tau}}{\left[B(|l|-1) +\beta \right]^2}\right)$. Coupling these results we have
\begin{equation*}
\int\limits_{\Nalphat(\ut)}  F_{\tau}(u)  \overline{F_{\tau}(u-\gamma_l)} \, du = O\left(\frac{\tau}{\left[B(|l|-1) +\beta \right]^2}\right).
\end{equation*}
The infinite summation then leads to
\begin{align}
\int\limits_{\Nalphat(\ut)}  F_{\tau}(u) \overline{\left(\sum_{l=-\infty, l \not=0}^{\infty} F_{\tau}(u-\gamma_l)\right)} \, du & = \sum_{l=-\infty, l \not=0}^{\infty} \int\limits_{\Nalphat(\ut)}  F_{\tau}(u) \overline{F_{\tau}(u-\gamma_l)} \, du \nonumber \\
\label{bound:integralFwithae}
& = O(\tau).
\end{align}
Combining (\ref{bound:integralct}), (\ref{bound:integrale3sq}), (\ref{bound:integralaesq}), (\ref{bound:integrale4}), and (\ref{bound:integralFwithae}), the proof follows.
\end{proof}

\section{Expression for the error $\mathbf{\epsilon_{2,\tau}}$}
\label{sec:e2bound}
Let 
\begin{equation*}
\epsilon_{2,\tau}(\ut) = \sum_{t=1}^{M(\ut)}\epsilon_{2,t,\tau}(\ut) + \tilde{\epsilon}_{2,t,\tau}(\ut) 
\end{equation*}
where $\epsilon_{2,t,\tau}(\ut)$ and $\tilde{\epsilon}_{2,t,\tau}(\ut)$ are the stationary phase errors incurred while evaluating $K_{t,\tau}(\ut)$ and $\tilde{K}_{t,\tau}(\ut)$ respectively. As before, let the finite set $\{x_t\}_{t=1}^{M(\ut)}$ be the location of the stationary points for the given $\ut$. The Theorem~13.1 in Chapter~3 of \cite{OlverBook74} expresses $\epsilon_{2,t,\tau}(\ut)$ as
\begin{equation}
\label{eq:e2terror}
\epsilon_{2,t,\tau}(\ut) =  -\epsilon_{2,1,t,\tau}(\ut) + \epsilon_{2,2,t,\tau}(\ut)
\end{equation}
where
\begin{align}
\label{eq:e21error}
\epsilon_{2,1,t,\tau}(\ut)  &= \frac{\exp\left(\frac{i T(x_t;\ut)}{\tau}\right)}{\sqrt{2\left|S^{\prime \prime}(x_t)\right|}} \int\limits_{\eta}^{\infty}\exp\left(\frac{i v}{\tau}\right) v^{-1/2} \, dv  \hspace{10pt} \mbox{and} \\
\label{eq:e22error}
\epsilon_{2,2,t,\tau}(\ut) &= \exp\left(\frac{i T(x_t;\ut)}{\tau}\right) \int\limits_{0}^{\eta} \exp\left(\frac{i v}{\tau}\right) \Q(v;\ut) \, dv.
\end{align}
Here
\begin{align}
v &= T(x;\ut)- T(x_t;\ut), \nonumber \\
\eta &= T(c_t;\ut)- T(x_t;\ut)  \hspace{10pt} \mbox{and} \nonumber \\
\label{def:Q}
\Q(v;\ut) = &\frac{1}{s(x(v))-\ut} - \frac{1}{\sqrt{2 |S^{\prime \prime}(x_t)|}v^{\frac{1}{2}}}.
\end{align}
As $T(x;\ut)$ is strictly monotonic in $[x_t,c_t]$ it is proper to express $x$ as a function of $v$. Evaluating (\ref{eq:e21error}) by integration by parts twice we get
\begin{equation}
\label{eq:e21errorexpressed}
\epsilon_{2,1,t,\tau}(\ut) = i \tau \frac{\exp\left(\frac{i T(c_t;\ut)}{\tau}\right)}{\sqrt{2\left|S^{\prime \prime}(x_t)\right|} \left[T(c_t;\ut)-T(x_t;\ut)\right]^{\frac{1}{2}}} + O(\tau^2).
\end{equation}

We obtain the expression for $\epsilon_{2,2,t,\tau}(\ut)$ by pursuing along the lines of Theorem~12.3 and Theorem~13.1 in Chapter~3 of the book \cite{OlverBook74} and the article \cite{OlverArticle74}. Prior to delving into the details we would like to underscore that the $O(\tau \sqrt{\tau})$ error bound in (\ref{bound:intQprime}) deduced for our specialized stationary phase approximation setting (for e.g. assuming that $S^{\prime \prime}(x_t)$ does not vanish) is \emph{stronger} than the $O(\tau)$ bound derived in the aforementioned citations where the author studies a broader scenario. This stronger result is key to our $O(1/N)$ approximation result. 

As shown in \cite{OlverArticle74}, for small values of $v$, $x-x_t$ can be expanded---as a function of $v$---in asymptotic series of the form 
\begin{equation}
\label{eq:xseriesexpansion}
x-x_t \sim \sum\limits_{l=0}^\infty d_l v^{(l+1)/2}
\end{equation}
where the coefficients $d_l$ may be obtained by following the standard procedures for reverting the series. In particular $d_0  = \sqrt{\frac{2}{|S^{\prime \prime}(x_t)|}}$. The other constants $d_1,d_2,\cdots$ are a function of \emph{second and higher derivatives} of $T$ around the stationary point $x_t$. Hence they depend only on the nature of the function $S$ around $x_t$ and not directly on the frequency $\ut$. The indirect dependency on $\ut$ is only through its corresponding stationary point $x_t$ as elucidated below. Differentiating with respect to $v$ we get
\begin{equation*}
\frac{1}{s(x(v))-\ut} = \frac{\,dx}{\,dv} \sim \sum\limits_{l=0}^\infty \left(\frac{l+1}{2}\right) d_l v^{(l-1)/2}.
\end{equation*}
Letting $a_l = \frac{l+1}{2} d_l$, $\Q(v;\ut)$ can be seen to admit a series expansion,
 \begin{equation}
 \label{eq:Qseriesexpansion}
 \Q(v;\ut) \sim a_1+a_2 v^{\frac{1}{2}} + a_3 v + a_4 v^{\frac{3}{2}} + \cdots,
 \end{equation}
as $v \rightarrow 0^{+}$.  It is also shown in \cite{OlverArticle74} that
\begin{align}
\label{bound:finiteintegral}
\int\limits_{0}^{\eta} \left|\Q^{\prime}(v;\ut)\right| \, dv < \infty.
\end{align}
 Computing (\ref{eq:e22error}) by integration by parts and noticing that $\lim\limits_{v \rightarrow 0^+} \Q(v;\ut) = a_1$ we get
\begin{align}
\label{eq:e22errorexpressed}
\epsilon_{2,2,t,\tau}(\ut) = &-i\tau \exp\left(\frac{i T(c_t;\ut)}{\tau}\right)\left[\frac{1}{s(c_t)-\ut} - \frac{1}{\sqrt{2\left|S^{\prime \prime}(x_t)\right|} \left[T(c_t;\ut)-T(x_t;\ut)\right]^{\frac{1}{2}}}\right] \\
& + a_1  i \tau \exp\left(\frac{i T(x_t;\ut)}{\tau}\right) \\
\label{eq:intQprime}
&+i \tau \exp\left(\frac{i T(x_t;\ut)}{\tau}\right) \int\limits_{0}^{\eta}  \exp\left(\frac{i v}{\tau}\right) \Q^{\prime}(v;\ut) \, dv.
\end{align}
The finiteness of (\ref{bound:finiteintegral}) assures that (\ref{eq:intQprime}) is bounded. Our next task is to capture this bound as a function of $\tau$.

Based on the series expression for $ \Q(v;\ut)$ in (\ref{eq:Qseriesexpansion}) we see that $\Q^{\prime}(v;\ut) = O\left(v^{\frac{-1}{2}}\right)$ and $\Q^{\prime \prime}(v;\ut) = O\left(v^{\frac{-3}{2}}\right)$ \emph{independent} of $\tau$ as $v \rightarrow 0^{+}$. Then there exist constants $\lambda>0$ and $\epsilon>0$---independent of $\tau$---such that $\left|\Q^{\prime}(v;\ut)\right| \leq \epsilon v^{\frac{-1}{2}}$ and $\left|\Q^{\prime \prime}(v;\ut)\right| \leq \epsilon v^{\frac{-3}{2}}$ when $v \leq \lambda$. In the subsequent paragraph we would like to add an important technical note on the choice of $\lambda$ and $\epsilon$. The reader may choose to skip the next paragraph without loss of continuity but bear in mind to refer to it when we discuss the proof of Lemma~\ref{lemma:boundsonintegrals} in Appendix~\ref{sec:proofoflemma:boundsonintegrals}.

As mentioned above, the constants $d_1, d_2, \cdots$ in (\ref{eq:xseriesexpansion}) depend only on the property of $S$ around $x_t$ and not directly on $\ut$. However, as $\ut$ is varied (say) over a small compact interval $\Nalphat(\ut)$ (which we soon require in Lemma~\ref{lemma:boundsonintegrals}), the corresponding stationary point $x_t(\ut)$, now explicitly expressed as the function of $\ut$, moves around in the compact interval $s^{(-1)}\left(\Nalphat(\ut)\right) = \Nalphat(x_t)$ influencing the constants in (\ref{eq:xseriesexpansion}) and creating an indirect dependency of them on $\ut$. It can be verified from \cite{Olver70} that the constant $d_2$ (and thereby $a_2$) which decides the aforementioned growth rate of $\Q^{\prime}(v;\ut)$ and $\Q^{\prime \prime}(v;\ut)$ as $v \rightarrow 0^{+}$ varies $\propto \frac{1}{\left[S^{\prime \prime}(x_t(\ut))\right]^{2+3/2}}$ with $S^{\prime \prime}(x_t(\ut))$ being the only derivative of $S$ appearing in the denominator. As we proceed, we will soon see that our choice of neighborhood $\Nalphat(\ut)$ will be pursuant to Lemma~\ref{lemma:neighborhoodLemma} where in $\Nalphat(x_t)$, $S^{\prime \prime}(x_t(\ut)) \not=0$ and bounded away from zero. This in turn enable us to choose a single value for each the constants $\lambda$ and $\epsilon$ for all $\ut \in \Nalphat(\ut)$. 

Since we are interested in $N \rightarrow \infty$ or equivalently $\tau \rightarrow 0$, let $\tau$ be such that $\tau \leq \lambda$. Our subsequent steps closely follow Theorem~12.3 in Chapter~3 of \cite{OlverBook74}. Lack of a strong constraint---$\Q^{\prime}(v;\ut) = o\left(v^{\frac{-1}{2}}\right)$ and $\Q^{\prime \prime}(v;\ut) = o\left(v^{\frac{-3}{2}}\right)$---precludes us from directly applying Theorem~12.3 to prove a stronger assertion. However, the weaker constraints on $\Q^{\prime}$ and $\Q^{\prime \prime}$ ($O$ instead of $o$) leads to an equivalently weak but a sufficiently strong result.

\noindent Dividing the integral (\ref{eq:intQprime}) at $v = \tau$ we get
\begin{align}
\label{bound:intQprime1}
\left|\int\limits_{0}^{\tau}  \exp\left(\frac{i v}{\tau}\right) \Q^{\prime}(v;\ut) \, dv \right| \leq \epsilon \int\limits_{0}^{\tau}  v^{\frac{-1}{2}} \, dv = 2 \epsilon \sqrt{\tau}.
\end{align}
Using integration by parts we find
\begin{align}
\int\limits_{\tau}^{\eta}  \exp\left(\frac{i v}{\tau}\right) \Q^{\prime}(v;\ut) \, dv  = & \frac{\tau}{i}\left[\exp\left(\frac{i \eta}{\tau}\right) \Q^{\prime}(\eta;\ut) -  \exp\left(\frac{i \tau}{\tau}\right) \Q^{\prime}(\tau;\ut)\right] \nonumber \\
&- \frac{\tau}{i} \int\limits_{\tau}^{\lambda}  \exp\left(\frac{i v}{\tau}\right) \Q^{\prime \prime}(v;\ut) \, dv \nonumber \\
\label{eq:intQprime2ndpartexpressed}
&- \frac{\tau}{i} \int\limits_{\lambda}^{\eta}  \exp\left(\frac{i v}{\tau}\right) \Q^{\prime \prime}(v;\ut) \, dv.
\end{align}
Recalling that $\Q^{\prime \prime}(v;\ut) \leq \epsilon \tau^{\frac{-3}{2}}$ when $v \leq \lambda$ we further have
\begin{equation}
\label{bound:intQprime2}
\left|- \frac{\tau}{i} \int\limits_{\tau}^{\lambda}  \exp\left(\frac{i v}{\tau}\right) \Q^{\prime \prime}(v;\ut) \, dv \right| \leq \tau \int\limits_{\tau}^{\lambda} \epsilon v^{\frac{-3}{2}} \,dv  =  2 \tau \epsilon \left(\frac{1}{\sqrt{\tau}} - \frac{1}{\sqrt{\lambda}}\right) < 2 \epsilon \sqrt{\tau},
\end{equation}
and as $\lambda$ is independent of $\tau$ and $\Q^{\prime \prime}(v;\ut)$ is bounded away from zero for $v \in [\lambda,\eta]$ we get
\begin{equation}
\label{bound:intQprime3}
\left|\frac{\tau}{i} \int\limits_{\lambda}^{\eta}  \exp\left(\frac{i v}{\tau}\right) \Q^{\prime \prime}(v;\ut) \, dv \right| \leq \tau \int\limits_{\lambda}^{\eta} \left| \Q^{\prime \prime}(v;\ut) \right| \, dv = O(\tau).
\end{equation}
Using the bound $\left|\Q^{\prime}(\tau;\ut)\right| \leq \epsilon \tau^{\frac{-1}{2}}$ in (\ref{eq:intQprime2ndpartexpressed}) and combining (\ref{bound:intQprime1}), (\ref{eq:intQprime2ndpartexpressed}), (\ref{bound:intQprime2}) and (\ref{bound:intQprime3}) we arrive at
\begin{equation}
\label{bound:intQprime}
i \tau \int\limits_{0}^{\eta}  \exp\left(\frac{i v}{\tau}\right) \Q^{\prime}(v;\ut) \, dv = O(\tau \sqrt{\tau})
\end{equation}
as $\tau \rightarrow 0$ ($N \rightarrow \infty$). Plugging (\ref{bound:intQprime}) in (\ref{eq:e22errorexpressed}) and subtracting (\ref{eq:e21errorexpressed}) gives us
\begin{align}
\epsilon_{2,t,\tau}(\ut) = & i \tau a_1  \exp\left(\frac{i T(x_t;\ut)}{\tau}\right) -i\tau \frac{\exp\left(\frac{i T(c_t;\ut)}{\tau}\right)}{s(c_t)-\ut} \nonumber \\
\label{eq:e2terrorexpressed}
& + O(\tau^2) + O(\tau \sqrt{\tau}).
\end{align}
We would like to add the following important remark about the first term on the right side of (\ref{eq:e2terrorexpressed}). The computation of the error $\tilde{\epsilon}_{2,t,\tau}(\ut)$ along similar lines on the interval $[c_{t-1},x_t]$ will produce the exact expression but with a \emph{negative sign}. These two terms cancel with each other leaving no expression in $\epsilon_{2,\tau}(u)$ containing $T(x_t;\ut)$ in the phase. The total stationary phase error at the critical point $x_t$ equals
\begin{align*}
\epsilon_{2,t,\tau}(\ut)  + \tilde{\epsilon}_{2,t,\tau}(\ut)  = & i\tau \frac{\exp\left(\frac{i T(c_{t-1};\ut)}{\tau}\right)}{s(c_{t-1})-\ut} - i\tau \frac{\exp\left(\frac{i T(c_t;\ut)}{\tau}\right)}{s(c_t)-\ut} \\
& + O(\tau \sqrt{\tau}).
\end{align*}
Being a telescopic series the adjacent terms cancel each other when summed and
\begin{align}
\label{eq:e2errorexpressed}
\epsilon_{2,\tau}(u) = & i\tau \frac{\exp\left(\frac{i T(c_0;\ut)}{\tau}\right)}{s(c_0)-\ut} - i\tau \frac{\exp\left(\frac{i T(c_{M(\ut)};\ut)}{\tau}\right)}{s(c_{M(\ut)})-\ut} \\
\label{eq:e2errorterm}
&+O(\tau \sqrt{\tau}).
\end{align}

\bibliographystyle{siamplain}
\bibliography{ConvergenceRateGradDensityEstimation}

\begin{thebibliography}{10}

\bibitem{Billingsley95}
{\sc P.~Billingsley}, {\em Probability and measure}, Wiley-Interscience, New
  York, NY, 3rd~ed., 1995.

\bibitem{Bishop06}
{\sc C.~Bishop}, {\em Pattern recognition and machine learning ({I}nformation
  science and statistics)}, Springer, New York, NY, 2006.

\bibitem{Bracewell99}
{\sc R.~Bracewell}, {\em The {F}ourier transform and its applications},
  McGraw-Hill, New York, NY, 3rd~ed., 1999.

\bibitem{Cencov62}
{\sc N.~Cencov}, {\em Estimation of an unknown distribution density from
  observations}, Soviet Math., 3 (1962), pp.~1559--1562.

\bibitem{Cooke82}
{\sc J.~Cooke}, {\em Stationary phase in two dimensions}, IMA J. Appl. Math.,
  29 (1982), pp.~25--37.

\bibitem{Cooley65}
{\sc J.~Cooley and J.~Tukey}, {\em An algorithm for the machine calculation of
  complex {F}ourier series}, Math. Comp., 19 (1965), pp.~297--301.

\bibitem{Dalal05}
{\sc N.~Dalal and B.~Triggs}, {\em Histograms of oriented gradients for human
  detection}, in IEEE Conference on Computer {V}ision and {P}attern
  {R}ecognition {(CVPR)}, 2005, pp.~886--893.

\bibitem{Gurumoorthy12}
{\sc K.~Gurumoorthy and A.~Rangarajan}, {\em Distance transform gradient
  density estimation using the stationary phase approximation}, SIAM J. Math.
  Anal., 44 (2012), pp.~4250--4273.

\bibitem{Gurumoorthy13}
{\sc K.~Gurumoorthy, A.~Rangarajan, and J.~Corring}, {\em Gradient density
  estimation in arbitrary finite dimensions using the method of stationary
  phase}, CoRR, abs/1211.3038 (2013).

\bibitem{Hu13}
{\sc R.~Hu and J.~Collomosse}, {\em A performance evaluation of gradient field
  {HOG} descriptor for sketch based image retrieval}, Comput. Vision Image
  Underst., 117 (2013), pp.~790--806.

\bibitem{Jones58}
{\sc D.~Jones and M.~Kline}, {\em Asymptotic expansions of multiple integrals
  and the method of stationary phase}, J. Math. Phys., 37 (1958), pp.~1--28.

\bibitem{McClure91}
{\sc J.~McClure and R.~Wong}, {\em Two-dimensional stationary phase
  approximation: Stationary point at a corner}, SIAM J. Math. Anal., 22 (1991),
  pp.~500--523.

\bibitem{McClure97}
{\sc J.~McClure and R.~Wong}, {\em Justification of the stationary phase
  approximation in time-domain asymptotics}, Proc. Math. Phys. Eng. Sci., 453
  (1997), pp.~1019--1031.

\bibitem{Olver70}
{\sc F.~Olver}, {\em Why steepest descents?}, SIAM Review, 12 (1970),
  pp.~228--247.

\bibitem{OlverBook74}
{\sc F.~Olver}, {\em Asymptotics and special functions}, Academic Press, New
  York, NY, 1974.

\bibitem{OlverArticle74}
{\sc F.~Olver}, {\em Error bounds for stationary phase approximations}, SIAM J.
  Math. Anal., 5 (1974), pp.~19--29.

\bibitem{Parzen62}
{\sc E.~Parzen}, {\em On the estimation of a probability density function and
  the mode}, Ann. Math. Stat., 33 (1962), pp.~1065--1076.

\bibitem{Rosenblatt56}
{\sc M.~Rosenblatt}, {\em Remarks on some nonparametric estimates of a density
  function}, Ann. Math. Stat., 33 (1956), pp.~832--837.

\bibitem{Rudin76}
{\sc W.~Rudin}, {\em Principles of mathematical analysis}, McGraw-Hill, New
  York, NY, 3rd~ed., 1976.

\bibitem{Scott79}
{\sc D.~Scott}, {\em On optimal and data-based histograms}, Biometrika, 66
  (1979), pp.~605--610.

\bibitem{Silverman86}
{\sc B.~Silverman}, {\em Density estimation for statistics and data analysis},
  Chapman and Hall/CRC, New York, NY, 1986.

\bibitem{Elias71}
{\sc E.~Stein and G.~Weiss}, {\em Introduction to {F}ourier analysis on
  {E}uclidean spaces}, Princeton University Press, Princeton, NJ, 1971.

\bibitem{Wahba75}
{\sc G.~Wahba}, {\em Optimal convergence properties of variable knot, kernel,
  and orthogonal series methods for density estimation}, Ann. Stat., 3 (1975),
  pp.~15--29.

\bibitem{Wong81}
{\sc R.~Wong and J.~McClure}, {\em On a method of asymptotic evaluation of
  multiple integrals}, Math. Comp., 37 (1981), pp.~509--521.

\bibitem{Zhu06}
{\sc Q.~Zhu, M.-C. Yeh, K.-T. Cheng, and S.~Avidan}, {\em Fast human detection
  using a cascade of histograms of oriented gradients}, in IEEE Conference on
  Computer {V}ision and {P}attern {R}ecognition {(CVPR)}, 2006, pp.~1491--1498.

\end{thebibliography}
\end{document}